\documentclass{article}

\usepackage{times}
\usepackage[margin=1in]{geometry}
\usepackage[utf8]{inputenc}
\usepackage[T1]{fontenc}
\usepackage{hyperref}
\usepackage{url}
\usepackage{booktabs}
\usepackage{amsfonts}
\usepackage{amsmath}
\usepackage{amssymb}
\usepackage{amsthm}
\usepackage{nicefrac}
\usepackage{microtype}
\usepackage{graphicx}
\usepackage[table]{xcolor}
\usepackage{subcaption}
\usepackage{makecell}
\usepackage{multirow}
\usepackage{xspace}
\usepackage{natbib}
\usepackage{algorithm}
\usepackage{algorithmic}
\usepackage{enumitem}
\usepackage[capitalise]{cleveref}

\newtheorem{theorem}{Theorem}
\newtheorem{proposition}[theorem]{Proposition}
\newtheorem{lemma}[theorem]{Lemma}

\newtheorem{remark}[theorem]{Remark}

\newcommand{\bandmf}{\texttt{DP-BandMF}\xspace}
\newcommand{\dpmf}{\texttt{DP-MF}\xspace}
\newcommand{\dpsgd}{\texttt{DP-SGD}\xspace}
\newcommand{\poisson}{\texttt{Poisson}\xspace}
\newcommand{\ra}{\texttt{Random Allocation}\xspace}
\newcommand{\pld}{\texttt{PLD}\xspace}
\newcommand{\mc}{\texttt{MC}\xspace}

\newcommand{\deterministic}{\texttt{Deterministic}\xspace}
\newcommand{\cyclicpoisson}{\texttt{Cyclic Poisson}\xspace}
\newcommand{\ballsinbins}{\texttt{Balls-in-Bins}\xspace}
\newcommand{\minseppoisson}{\texttt{Min Sep Poisson}\xspace}

\newcommand{\cyclicra}{\texttt{Cyclic RA}\xspace}
\newcommand{\nestedra}{\texttt{Nested RA}\xspace}
\newcommand{\minsepra}{\texttt{Min Sep RA}\xspace}

\title{Better Batches, Better Privacy:\\Random Allocation for DP-BandMF}

\author{
  Ryan McKenna \\
  Google Research \\
  \texttt{mckennar@google.com}
  \and
  Arun Ganesh \\
  Google Research \\
  \texttt{arunganesh@google.com}
}

\newcommand{\boldzero}{\ensuremath{\boldsymbol{0}}}

\newcommand{\bfA}{\ensuremath{\mathbf{A}}}

\newcommand{\bfC}{\ensuremath{\mathbf{C}}}

\newcommand{\bfI}{\ensuremath{\mathbf{I}}}

\newcommand{\bfc}{\ensuremath{\mathbf{c}}}

\newcommand{\bfx}{\ensuremath{\mathbf{x}}}
\newcommand{\bfy}{\ensuremath{\mathbf{y}}}
\newcommand{\bfz}{\ensuremath{\mathbf{z}}}

\newcommand{\calD}{\ensuremath{\mathcal{D}}}

\newcommand{\calM}{\ensuremath{\mathcal{M}}}
\newcommand{\calN}{\ensuremath{\mathcal{N}}}

\newcommand{\calS}{\ensuremath{\mathcal{S}}}

\newcommand{\ltwo}[1]{\left\|#1\right\|_2}

\begin{document}

\maketitle

\begin{abstract}
The current state of the art for batch selection and noise addition in differentially private training is a combination of DP-BandMF (informally, DP-SGD where the noise added in two different steps can be correlated if the steps are within $b$ of each other) and privacy amplification by sampling via Monte Carlo accounting. First, whereas prior batch selection strategies for DP-BandMF were inspired by Poisson sampling, random allocation is known to outperform Poisson sampling for DP-SGD. We therefore propose and analyze new batch selection strategies for DP-BandMF inspired by random allocation, improving RMSE by up to $1.85\times$ over previous work. Second, Monte Carlo accounting typically requires drawing $\Omega(1/\delta)$ samples from a privacy loss distribution, making accounting costly. We increase sampling and scoring throughput by two orders of magnitude through a carefully optimized GPU implementation. Finally, previous work chose the DP-BandMF strategy matrix and the batch selection strategy independently. We give a simple framework for choosing these parameters jointly, reducing RMSE by up to $13.4\%$ over fixed strategy matrices.
\end{abstract}

\section{Introduction}

Differential privacy (DP) \citep{dwork2006calibrating} guarantees that an algorithm's output on two datasets $D, D'$ differing in one example satisfy some indistinguishability guarantee. The de facto DP ML algorithm is differentially private stochastic gradient descent (\dpsgd) \citep{abadi2016deep}, which clips each per-example gradient to a maximum norm, and adds Gaussian noise to their sum.

A key technique for \dpsgd's success is \textit{privacy amplification by sampling} \citep{kasiviswanathan2008what}. Privacy amplification by sampling is the observation that by randomizing batches in \dpsgd, the problem of guessing whether $D$ or $D'$ was used becomes strictly harder, allowing us to reduce the noise while retaining the same privacy guarantee. For most of the time since the popularization of \dpsgd by the seminal work of \citep{abadi2016deep}, the default method for randomizing batches (which we refer to as a \textit{batch selection strategy}) has been \textit{Poisson sampling}: each example in the dataset is included in each batch independently with probability $p$.

Recent implementations of data processing pipelines \citep{chua2024scalable, grain2023github} allow for random access to data elements and arbitrary batch sizes, making almost any batch selection strategy readily implementable. In turn, choosing batch selection strategies which admit lower noise multipliers is a low-barrier, high-yield path to improving practical DP training. In light of this, several recent works have proposed alternatives to Poisson sampling, improving on it in two directions:

\begin{itemize}
    \item \textbf{Controlling participation variance \citep{shenfeld2026privacy, dong2026balanced, dong2026optimal}}: A downside of Poisson sampling is that an example can participate many times, meaning in some extreme settings Poisson sampling is actually worse than deterministic batching for privacy \citep{chua24howprivate}. A natural correction is \textit{random allocation\footnote{As named by \cite{shenfeld2026privacy}; it is also called balanced iteration subsampling by the concurrent work of \cite{dong2026balanced}.}:} fix the number of participations, and only randomize their position.
    \item \textbf{Matrix factorization \citep{choquettechoo2025nearexact, dong2025bminsep}}: \dpsgd adds independent noise in each step. \dpmf is a generalization that allows one to add noise that is correlated in different steps. \dpmf breaks a standard assumption in privacy analysis that each step's randomness is independent. \cite{choquettechoo2025nearexact}  bypassed this assumption by analyzing privacy amplification by sampling for \dpmf via Monte Carlo accounting, where a sufficient primitive is to compute the likelihood ratio for an output of the algorithm on adjacent datasets. For Poisson sampling this primitive is inefficient, but several works have proposed alternative batch selection strategies for which this primitive is efficient. 
\end{itemize}

While for \dpsgd \cite{dong2026optimal} provides evidence that random allocation is near-optimal for \dpsgd, for \dpmf there is still a rich yet largely unexplored space of batch selection strategies. As \dpmf can substantially improve utility in practice compared to \dpsgd \citep{xu2023federated}, identifying the best combination of privacy amplification and \dpmf is an important practical question. In this paper we attempt to enumerate and compare some of these strategies.

\paragraph{Contributions}
We connect several ideas presented in recent prior work: (1) random allocation for \dpsgd \citep{shenfeld2026privacy, dong2026balanced}, (2) Monte Carlo accounting for \dpmf \citep{choquettechoo2023amplified, dong2025bminsep}, and (3) parametric \dpmf optimization under Monte Carlo accounting \citep{kalinin2026dp, mckennaScaling2024}.

\begin{itemize}
    \item \textbf{Random Allocation for \bandmf:} We first attempt to identify the best batch selection strategy for \bandmf, a subclass of \dpmf that is particularly amenable to privacy amplification. We introduce three new randomized batch selection strategies for \bandmf that guarantee exact per-example participation counts while enforcing a useful ``$b$-separation'' property in \cref{sec:new_strategies}. We derive exact recurrence relations for their likelihood ratios to enable Monte Carlo accounting in \cref{sec:recurrences}. Across all evaluated settings our proposed strategies consistently improve utility, achieving a median $1.54\times$ (up to $1.85\times$) improvement in RMSE (the standard utility metric for \dpmf) over the baseline, and a median $1.31\times$ (up to $1.36\times$) improvement over the prior state of the art.
    
    \item \textbf{Efficient Monte Carlo Accounting:} We next aim to make Monte Carlo accounting more efficient. We show that the structure of our batch selection strategies and Toeplitz matrices make Monte Carlo accounting particularly well-suited to GPU acceleration in \cref{sec:implementation}. Through a carefully-optimized JAX implementation, we demonstrate throughput exceeding $10^6$ privacy loss samples per second on an H100 GPU in \cref{fig:throughput}, representing a $50\times$ to $100\times$ speedup over existing CPU baselines and making the technique more practical and accessible.
    
    \item \textbf{Strategy Optimization under Monte Carlo Accounting:} The above contributions assume the \bandmf strategy matrix is fixed in advance. We propose a simple and efficient method to optimize over a single-parameter class of strategy matrices under Monte Carlo accounting, and demonstrate that this can reduce RMSE over fixed strategies by up to $13.4\%$.
\end{itemize}

\section{Background}

\subsection{Differential Privacy, \dpsgd, and \dpmf}

Differential privacy \citep{dwork2006calibrating} is the gold standard for data privacy. Given a data universe $\calD$, we say two datasets $D, D' \in \calD^*$ are adjacent, denoted $D \simeq D'$, if $D'$ is the same as $D$ except with one example removed or vice-versa\footnote{This is the \textit{add-remove adjacency}. All results in the paper also hold for the \textit{zero-out adjacency}.}. Given two distributions $P, Q$ over an output space $\mathcal{Y}$, the $\alpha$-hockey stick divergence is $   H_\alpha(P, Q) := \mathbb{E}_{y \sim P}\!\left[\max\!\left\{1 - \alpha \frac{Q(y)}{P(y)}, 0\right\}\right].$ A mechanism $\calM: \calD^* \rightarrow \mathcal{Y}$ is $(\varepsilon, \delta)$-DP if for all $D \simeq D'$, we have $H_{e^\varepsilon}(\calM(D), \calM(D')) \leq \delta$.

\dpsgd \citep{abadi2016deep} is the standard approach for DP machine learning. In iteration $i \in [n]$, letting
$\textsf{clip}({\theta}, {C}) := \theta \cdot \frac{C}{\max\{\ltwo{\theta}, C\}},$ given batch $B_i$, clipping norm $C$, and noise multiplier $\sigma$, we compute the noisy gradient $\tilde{g}_i = \sum_{d \in B_i} \textsf{clip}(\nabla \ell(\theta_{i-1}; d), C) + \bfz_i,$ where $\bfz_i \sim \calN(\boldzero, \sigma^2 C^2 \bfI)$. The noisy gradient can then be used to perform a model update (from $\theta_{i-1}$ to $\theta_i$) by any first-order optimizer. Without loss of generality, we assume $C = 1$ throughout the paper.

\dpmf \citep{denisov2022improved} is a generalization of \dpsgd which uses noise correlated across steps. For simplicity we present it in the one-dimensional optimization setting. Let $\bfz \sim \calN(\boldzero, \sigma^2 \bfI_T)$. \dpmf is parameterized by a (lower-triangular) matrix $\bfC \in \mathbb{R}^{T \times T}$. \dpmf adds noise $(\bfC^{-1} \bfz)_i$ in iteration $i$, whereas \dpsgd would add noise $\bfz_i$. The setting $\bfC = \bfI$ retrieves \dpsgd. \dpmf has been observed to achieve substantial improvements in utility over \dpsgd \cite{denisov2022improved} including in production training runs \cite{xu2023federated}.

The main variant of \dpmf we are interested in is ($b$-)\bandmf \citep{choquettechoo2023amplified}, in which $\bfC$ is $b$-banded, i.e. $\bfC[i, j] = 0$ when $i \not \in [j, j + b - 1]$. A line of works \citep{choquettechoo2023amplified, choquettechoo2025nearexact, dong2025bminsep} have observed \bandmf is particularly amenable to privacy amplification by sampling. For convenience we also throughout the paper assume Toeplitz $\bfC$, i.e. those with constant diagonals, equivalently $\bfC_{i,j} = \bfC_{i+k, j+k}$ for all $i, j, k$.

\subsection{Privacy Analysis and Monte Carlo Accounting}

Privacy amplification by sampling \citep{kasiviswanathan2008what} is the observation that randomizing the batches used in \dpsgd and \dpmf can improve the privacy-utility tradeoff ``for free,'' i.e. without modifying training parameters like the number of iterations or (average) batch size. Hence, the problem of analyzing the privacy guarantees of \dpsgd or \dpmf with randomized batch selection strategies is fundamental for optimizing utility in practice. Lemma 4.5 of \cite{choquettechoo2024privacyamplification}, restated below for convenience, simplifies this task:

\begin{lemma}\label{lem:dimensionreduction}
Suppose we run $T$-step \dpmf using a batch selection strategy such that each example participates independently from each other. For a given example, let $\bfx \in \mathbb{R}^T$ be the random vector where $\bfx_i = 1$ if that example is included in batch $B_i$ and $0$ otherwise, and we assume $\bfx$ is distributed identically for all examples. Then \dpmf using this batch selection strategy is $(\varepsilon, \delta)$-DP iff $H_{e^\varepsilon}(P, Q) \leq \delta$ and $H_{e^\varepsilon}(Q, P) \leq \delta$, where $P = \bfC \bfx + \bfz$ and $Q = \bfz$, $\bfz \sim \calN(\boldzero, \sigma^2 \bfI_T)$.    
\end{lemma}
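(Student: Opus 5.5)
The plan is to reduce the full $T$-step, $d$-dimensional \dpmf output on neighboring datasets to the one-example, one-dimensional pair $(P,Q)$. This takes three moves: an invertible post-processing step, conditioning on the other examples' participation, and a worst-case gradient argument. Throughout I use the standard \emph{adaptive} view in which gradients may depend on previous outputs, and I work with the equivalent released quantity $\bfC \tilde{\bfg}$.

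\textbf{Reformulation.} Write $\tilde{\bfg} = \bfA + \bfC^{-1}\bfz$, where $\bfA$ stacks the clipped batch gradient sums. Since $\bfC$ is invertible and lower triangular, releasing $\tilde{\bfg}$ is equivalent to releasing
\[
\bfy = \bfC\bfA + \bfz,
\]
and each prefix of $\bfy$ determines the corresponding prefix of $\tilde{\bfg}$. Consider $D' = D \cup \{d^*\}$. Because participation is independent across examples, condition on the participation pattern of all examples other than $d^*$; these patterns have the same law under both datasets. Under $D$ the output is $\bfy = \bfC\bfA_0 + \bfz$. Under $D'$ it is $\bfy = \bfC\bfA_0 + \bfC(\bfx \odot \bfg^*) + \bfz$, where $\bfg^*_i$ is $d^*$'s clipped gradient, with $\|\bfg^*_i\|\le 1$. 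The shared adaptive term $\bfC\bfA_0$ is causally computable from past outputs. Subtracting it is therefore a bijective, adaptively computable transformation, and the hockey-stick divergence is invariant under it. Convexity of $H_\alpha$ in the mixture over other examples' patterns then reduces the problem to a single example.

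\textbf{Sufficiency (the main step).} I must show that the adversary's best choice is $\bfg^*_i = e$ for a fixed unit vector $e$ at every step. This choice gives exactly $P = \bfC\bfx + \bfz$ projected onto $e$; the orthogonal coordinates are pure noise and can be discarded. The general case is $\bfC(\bfx\odot\bfg^*) + \bfz$ with adaptive $\bfg^*_i$ of norm at most $1$, and bounding it is the hard part. My approach is the dominance argument of \cite{choquettechoo2024privacyamplification}. Conditioned on $\bfx$, the privacy loss is a sum over steps of Gaussian terms. For fixed $\bfx$ I would argue that the mixture's likelihood ratio is maximized when all contributions align. This requires care: under adaptivity, later gradients may depend on earlier noise, and aligning them is not obviously optimal. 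Here I would rely on the fact that $\bfx$ is independent of the noise and that, for every realization, the row-wise contributions of the nonnegative-entry $\bfC$ add coherently when aligned. Where that coherence argument falls short, I would fall back on citing the original lemma's proof for the adaptive-to-nonadaptive reduction.

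\textbf{Necessity.} Take a one-example dataset pair with loss gradient constantly $e$. The mechanism then outputs exactly $P$ versus $Q$ (or $Q$ versus $P$), in directions depending on whether the example is added or removed. Hence both divergence conditions are necessary, which gives the ``iff''.
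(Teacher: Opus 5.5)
Your preliminary reductions are fine. Multiplying by $\bfC$ and subtracting the causally computable term $\bfC\bfA_0$ is a bijection that preserves $H_\alpha$. Joint convexity handles the mixture over the other examples' participation, and the constant-gradient one-example dataset gives necessity. But these are routine. The content of the lemma is the sufficiency claim that aligned constant gradients are the worst adaptive adversary for the \emph{mixture} over $\bfx$. You do not prove that, and the per-realization ``coherence'' heuristic you sketch cannot prove it.

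For fixed $\bfx$ and non-adaptive gradients, the conditional pair is a Gaussian shift whose divergence depends only on $\|\bfC(\bfx\odot\mathbf{g})\|_F$. Per-$\bfx$ comparisons can at best give the convexity bound $\mathbb{E}_{\bfx} H_\alpha(\calN(\bfC(\bfx\odot\mathbf{g}),\sigma^2\bfI),\calN(\boldzero,\sigma^2\bfI))$, which is the unamplified guarantee. Amplification lives in cross terms between \emph{different} realizations: the mixture's privacy loss depends on $\langle\bfC_{:,i},\bfC_{:,j}\rangle\langle\mathbf{g}_i,\mathbf{g}_j\rangle$ for $i\in\mathrm{supp}(\bfx)$, $j\in\mathrm{supp}(\bfx')$.

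In this paper's setting ($b$-separated $\bfx$, $b$-banded $\bfC$) the heuristic is vacuous. The participating columns of one realization have disjoint supports, so $\|\bfC(\bfx\odot\mathbf{g})\|_F$ does not depend on the directions of the $\mathbf{g}_i$, and nothing ``adds coherently''. Under adaptivity the output conditioned on $\bfx$ is not even Gaussian.

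What is missing is a genuine comparison theorem for such mixtures that survives adaptivity. It is not a routine post-processing either. A linear Gaussian channel from the aligned output would need $(\bfC^\top\bfC)\circ[\langle\mathbf{g}_i,\mathbf{g}_j\rangle]_{ij}\preceq\bfC^\top\bfC$, which fails for orthonormal $\mathbf{g}_i$ whenever $\bfC^\top\bfC$ is not diagonal. So your argument rests entirely on the citation at the key step. That is exactly the paper's position: it restates Lemma 4.5 of \cite{choquettechoo2024privacyamplification} without proof.

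Separately, you silently use that $\bfC$ has nonnegative entries. The lemma as restated has no sign hypothesis, and some such hypothesis is genuinely needed. Take $T=2$, $\bfx=(1,1)$ deterministically (or Poisson with $p$ near $1$), and $\bfC=\begin{pmatrix}1&0\\-c&1\end{pmatrix}$ with $c>0$. Then $(P,Q)$ is a Gaussian shift of norm $\sqrt{1+(1-c)^2}$. Clipped gradients $(+1,-1)$ instead give a shift of norm $\sqrt{1+(1+c)^2}$, hence a strictly larger $H_{e^\varepsilon}$, so the ``if'' direction fails. This is harmless for the paper: the one-parameter family $\bfC[i,1]\propto r+1/i$ with $r>-1/b$ has positive entries. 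Your proof should nonetheless state the hypothesis explicitly.
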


Analyzing $H_{e^\varepsilon}(P, Q)$ is challenging because $P$, $Q$ are high-dimensional distributions and $P$ does not necessarily have a simple structure. This is in contrast to e.g. \dpsgd with Poisson sampling, where the coordinates of $P$ are independent. This enables techniques like composition theorems which reduce to easy-to-analyze one-dimensional distributions. There are two predominant techniques in the literature for handling privacy amplification of \dpmf:
\begin{itemize}
    \item \cite{choquettechoo2023amplified} propose ``cyclic'' batch selection strategies for \bandmf. Each cyclic strategy splits the dataset (arbitrarily) into $b$ subsets. The $i$th subset is only eligible to participate in steps $j$ where $j \equiv i \pmod b$. Within these steps, they use another ``inner'' batch selection strategy such as Poisson sampling to subsample from the eligible examples. They prove the privacy guarantees reduce to those of $T/b$-step \dpsgd using the inner batch selection strategy.
    \item Monte Carlo accounting \citep{wang2023randomized} uses Monte Carlo estimation of $H_{e^\varepsilon}(P, Q)$. In order to perform Monte Carlo accounting, it suffices to have primitives for (i) sampling $y \sim P$, (ii) evaluating $P(y)/Q(y)$ for any given $y$. (i) is straightforward since sampling $\bfx$ is equivalent to implementing the batch selection strategy, but (ii) can be more challenging since $\bfx$ can take on up to $2^T$ values. \cite{dong2025bminsep} gave a procedure which uses Monte Carlo accounting, and allows one to run end-to-end accounting and training with formal $(\varepsilon, \delta)$-DP, including calibration of the $\sigma$ parameter. We restate their procedure and its guarantees in \cref{sec:mc-background} for convenience. Notably, Monte Carlo accounting typically requires $> 1/\delta$ samples, making it much more expensive than standard accounting. While this adds significant overhead, \cite{dong2025bminsep} demonstrated that Monte Carlo accounting is still feasible at production scales.
\end{itemize}

\subsection{Prior Batch Selection Strategies}
\label{sec:prior_strategies}

We briefly review the batch selection strategies from prior work that are most relevant to our work:

\paragraph{Poisson sampling (\dpsgd):} Each example is included in each batch independently with probability $p$, yielding an expected participation count of $\mathbb{E}[N] = pT$ \citep{abadi2016deep}. While analytically convenient, an example's total participation count follows a Binomial distribution whose heavy right tail can degrade privacy \citep{chua24howprivate}.

\paragraph{Random Allocation (\dpsgd):} Each example independently picks a uniformly random subset of $K$ out of the $T$ batches to be included in. This batch selection strategy was proposed in several concurrent works \citep{shenfeld2026privacy, dong2026balanced}. \citet{dong2026optimal} gave a principled theoretical explanation and strong empirical evidence for why \dpsgd with random allocation seems to strictly improve on \dpsgd with Poisson sampling.

\paragraph{Deterministic (\bandmf):} Each example participates at fixed, evenly spaced iterations with gap $ T/K \ge b$ (we call the property of having size $\geq b$ gaps \textit{$b$-separation}). Because noise in \bandmf is correlated across windows of length $b$, $b$-separation ensures that an example's participations do not overlap in correlated noise, enabling a simpler privacy analysis. 

\paragraph{Balls-In-Bins (\bandmf):} Each example participates at evenly spaced iterations again, but the step of the first participation is randomized. \cite{choquettechoo2025nearexact} showed this is amenable to Monte Carlo accounting as $\bfx$ only takes on a few values.

\paragraph{Cyclic Poisson (\bandmf):} Each example is assigned to one of $b$  subsets $i \in [b]$ arbitrarily and participates independently with probability $p$ only in iterations $t \equiv i \pmod b$ \citep{choquettechoo2023amplified}. This provides $b$-separation, but still suffers from participation variance.

\paragraph{Min Sep Poisson (\bandmf):} The same as Poisson sampling, but rejecting any example that participated in the previous $b-1$ iterations \citep{dong2025bminsep}. This maintains $b$-separation while achieving better utility than cyclic Poisson,  but again suffers from participation variance.

As summarized in \cref{tab:batch_selection_taxonomy}, each existing strategy satisfies at most two of three desirable properties: fixed participation counts, $b$-separation, and maximum benefit from amplification. We formalize these design principles in \cref{sec:desiderata}.

\definecolor{cgreen}{RGB}{38, 138, 64}
\definecolor{xred}{RGB}{204, 51, 51}

\begin{table}[t]
\centering
\small
\caption{Taxonomy of batch selection strategies.}
\label{tab:batch_selection_taxonomy}
\begin{tabular}{lcccccc}
\toprule
\textbf{Strategy} & \textbf{Mechanism} & \textbf{Params} & \textbf{Fixed $K$} & \textbf{$b$-Sep} & \textbf{Entropy} & \textbf{Accounting} \\
\midrule
\poisson & \dpsgd & $p$ & \textcolor{xred}{$\times$} & \textcolor{xred}{$\times$} & \textcolor{cgreen}{\checkmark\checkmark} & \pld \\
\ra & \dpsgd & $K$ & \textcolor{cgreen}{\checkmark} & \textcolor{xred}{$\times$} & \textcolor{cgreen}{\checkmark\checkmark} & \pld \\
\deterministic & \dpmf + \bandmf & $K$ & \textcolor{cgreen}{\checkmark} & \textcolor{cgreen}{\checkmark} & \textcolor{xred}{$\times$} & \pld \\
\cyclicpoisson & \bandmf & $b, p$ & \textcolor{xred}{$\times$} & \textcolor{cgreen}{\checkmark} & \textcolor{cgreen}{\checkmark} & \pld \\
\ballsinbins & \dpmf + \bandmf & $b, p$ & \textcolor{cgreen}{\checkmark} & \textcolor{cgreen}{\checkmark} & \textcolor{cgreen}{\checkmark} & \pld \\
\minseppoisson & \bandmf & $b, p$ & \textcolor{xred}{$\times$} & \textcolor{cgreen}{\checkmark} & \textcolor{cgreen}{\checkmark\checkmark} & \mc \\
\midrule
\cyclicra & \bandmf & $b, K$ & \textcolor{cgreen}{\checkmark} & \textcolor{cgreen}{\checkmark} & \textcolor{cgreen}{\checkmark} & \pld \\
\nestedra & \bandmf & $b, K$ & \textcolor{cgreen}{\checkmark} & \textcolor{cgreen}{\checkmark} & \textcolor{cgreen}{\checkmark\checkmark} & \mc \\
\minsepra & \bandmf & $b, K$ & \textcolor{cgreen}{\checkmark} & \textcolor{cgreen}{\checkmark} & \textcolor{cgreen}{\checkmark\checkmark} & \mc \\
\bottomrule
\end{tabular}
\end{table}

\subsection{Related Work} \label{sec:related}

Our work operates at the intersection of privacy accounting, batch selection, and noise addition, which are three key building blocks for \dpsgd-style mechanisms (along with gradient clipping).

\paragraph{Privacy accounting.}
Tight privacy accounting provides utility gains without altering the training dynamics.
Research in this area has progressed from RDP-based analyses \citep{abadi2016deep,mironov2017renyi,balle2018privacy,feldman2018privacy,balle2020hypothesis} 
to tight numerical composition via PLD \citep{koskela2020computing,gopi2021numerical}.
As batching and noise mechanisms have grown more complex, accounting has shifted toward tight mechanism-specific analyses \citep{shenfeld2026privacy,feldman2026efficientprivacylossaccounting,schuchardt2026samplingfree,dong2025bminsep} and Monte Carlo accounting \citep{wang2023randomized,choquettechoo2025nearexact}.

\paragraph{Batch selection.}
Privacy analyses have traditionally assumed independent \poisson subsampling \citep{abadi2016deep,balle2018privacy} to simplify the math, while in practice training pipelines typically rely on fixed-size batches and sequential data access, creating a well-documented mismatch between theoretical accounting assumptions and actual implementations \citep{rodriguez2024efficient,chua2024scalable,chua24howprivate}.
Beyond the selected batch selection strategies highlighted in the main paper, other ones that have been proposed in practice include shuffling, sampling with replacement, and sampling without replacement \cite{balle2018privacy}. For cross-device federated learning settings where random access to individual users or records is infeasible, minimum separation (without any subsampling) is another batch selection strategy that has been proposed and analyzed in prior work \citep{kairouz2021practical,choquettechoo2023amplified,ganesh2025s}.

\paragraph{Noise addition.}
Rather than injecting independent Gaussian noise at each step as in standard \dpsgd, \dpmf adds noise that is correlated across iterations to reduce the error on the \emph{prefix sums} of gradients at the cost of increased error on individual gradients \citep{choquette2024correlated,denisov2022improved,choquette2022multi}. Modern approaches to this are based on the \emph{matrix mechanism}, originally developed for answering linear query workloads with DP \citep{li2015matrix,mckenna2018optimizing}. Work in this space includes advancing SOTA in RMSE \citep{choquettechoo2023amplified,choquettechoo2024privacyamplification,dong2025bminsep}, improving memory efficiency without sacrificing too much utility  \citep{mckennaScaling2024,dvijotham2024efficient,kalinin2026dp}, optimizing strategy matrices for realistic training scenarios rather than proxies based on RMSE \citep{koloskova2023gradient,gu2025correlating}, and theoretical analyses \citep{fichtenberger2023constant,henzinger2023almost}.

\paragraph{Gradient clipping.}
An orthogonal line of work improves per-example gradient clipping, a key primitive for \dpsgd and \dpmf.
Key innovations include adaptive and quantile-based clipping thresholds \citep{pichapati2019adaclip,andrew2021differentially}, computational methods such as ghost clipping and per-layer clipping \citep{bu2022scalable,bu2023automatic,he2022exploring}, and geometry-aware clipping strategies \citep{li2022differentially,ganesh2025design,gilani2026geoclip}.

\section{Batch Selection Strategies for \bandmf}

In this section, we propose and analyze a family of batch selection strategies for \bandmf based on random allocation. Below, we formalize the design principles, present our proposed batch selection strategies, derive their associated recurrence relations, and provide practical implementation details. For simplicity, we assume the number of bands $b$ evenly divides the number of iterations $T$ throughout the paper. We also assume $\bfC$ is $b$-banded Toeplitz, and the first column of $\bfC$ has norm $1$.

\subsection{Design Principles} \label{sec:desiderata}

Throughout this work we attempt to answer several fundamental questions about the combination of \bandmf and privacy amplification. We start with arguably the most immediate one: \textit{What is the batch selection strategy which maximally reduces the noise used by \bandmf?} Below we enumerate a set of desiderata that reasonable batch selection strategies should satisfy.

\begin{itemize}
    \item \textbf{Fixed-K}: Each example should participate a fixed number of times $K$. As \cite{dong2026optimal} demonstrate, this is the ``first-order'' predictor of the privacy-utility tradeoff.
    \item \textbf{$b$-Separation:} Each participation from the same example should be separated by $\geq b$ steps. As observed in \cite{choquettechoo2023amplified, choquettechoo2025nearexact}, this simplifies privacy analysis for \bandmf. 
    \item \textbf{Entropy}: All else equal, assignments of examples to iterations that are ``more random'' are generally preferred. As \cite{dong2026optimal} observed, more randomization helps, but is a ``second-order'' predictor of the privacy utility tradeoff compared to fixed-$K$.
    \item \textbf{Structure:} The following properties greatly simplify privacy analysis: (1) The batch assignment should be independent across examples (as required by \cref{lem:dimensionreduction}), and (2) the sampling should have a simple recursive structure to make Monte Carlo accounting tractable.
\end{itemize}

\subsection{Proposed Strategies} \label{sec:new_strategies}

To address the desiderata outlined in \cref{sec:desiderata}, we propose three new batch selection strategies for \bandmf that generalize random allocation \citep{shenfeld2026privacy, dong2026balanced}. Below, we formalize each strategy from the perspective of a single example's participation schedule.

\paragraph{Cyclic Random Allocation (\cyclicra).}
Our first batch selection strategy, \textit{cyclic random allocation} (\cyclicra), is an instantiation of the cyclic framework of \citet{choquettechoo2023amplified}, replacing the inner Poisson sampling with random allocation. Specifically, the dataset is partitioned into $b$ subsets (arbitrarily, but consistently across neighboring datasets $D \simeq D'$), and each example in subset $i \in [b]$ is eligible to participate only in rounds $t \equiv i \pmod b$. Within these eligible rounds, each example selects exactly $K$ rounds uniformly at random. Just as random allocation improves upon Poisson sampling for \dpsgd, we expect \cyclicra to improve upon \cyclicpoisson. As we show in \cref{sec:recurrences}, \cyclicra inherits the reduction of \citet{choquettechoo2023amplified}, allowing near-exact privacy analysis via standard privacy loss distributions (PLD).

\begin{figure}[t] 
\begin{minipage}[t]{0.42\textwidth}
\begin{algorithm}[H]
\caption{\nestedra}
\label{alg:nested_ra_def}
\begin{algorithmic}[1]
\REQUIRE Parameters $(T, K, b)$
\STATE Sample $i \sim \text{Uniform}([b])$
\STATE $\mathcal{E} \leftarrow \{t \in [T] : t \equiv i \pmod b\}$
\STATE Randomly sample $S \subseteq \mathcal{E}$ of size $K$
\RETURN $S$
\end{algorithmic}
\end{algorithm}
\end{minipage}
\hfill
\begin{minipage}[t]{0.58\textwidth}
\begin{algorithm}[H]
\caption{\minsepra}
\label{alg:balanced_min_sep_def}
\begin{algorithmic}[1]
\REQUIRE Parameters $(T, K, b)$ with $T \ge Kb$
\STATE Let $\mathcal{V} \leftarrow \big\{S \subset [T] : |S| = K, | s_i - s_j | \geq b \forall i \neq j \} $
\STATE Sample $S \sim \text{Uniform}(\mathcal{V})$
\RETURN $S$
\end{algorithmic}
\end{algorithm}
\end{minipage}
\caption{Conceptual definitions of our batch selection strategies applied to a single example. \nestedra restricts candidates to a residue class modulo $b$, then samples $K$ steps uniformly without replacement. \minsepra samples uniformly at random from the set of all $K$-element subsets that satisfy the minimum separation $b$. \cyclicra is the same as \nestedra, but the residue class $i$ can be chosen arbitrarily rather than randomly.}
\label{fig:sampling_definitions}
\end{figure}

\paragraph{Nested Random Allocation (\nestedra).}
A limitation of \cyclicpoisson and \cyclicra is that the choice of modulus $i \in [b]$ for each example is arbitrary, when it could be randomized for better privacy. \textit{Nested random allocation} (\nestedra) randomizes the modulus by assigning each example uniformly at random to one of the $b$ subsets. It is otherwise identical to \cyclicra. A formal specification of \nestedra is given in \cref{alg:nested_ra_def} , written as a sampler for a set of steps for a single example. Because participations under different moduli can affect the same output coordinates, the reduction of \citet{choquettechoo2023amplified} cannot account for this added randomness; we instead analyze \nestedra via Monte Carlo accounting in \cref{sec:recurrences}.

\paragraph{Min Sep Random Allocation (\minsepra).}

Just as \minseppoisson improves upon \cyclicpoisson by enforcing $b$-separation only on \textit{actual} participations rather than \textit{potential} participations, we do the same but for random allocation. In \textit{min sep random allocation} (\minsepra), each example independently selects a uniformly random subset of $K$ rounds from $[T]$ satisfying the $b$-separation property ($|s_i - s_j| \ge b$ for all $i \neq j$). A formalization of \minsepra is given in \cref{alg:balanced_min_sep_def}.
The following proposition enables an efficient implementation:

\begin{proposition}[Schedule Bijection]\label{prop:bijection}
Let $T' = T - (K-1)(b-1)$. The map $\phi(s_1, \dots, s_K) = (s_1, s_2 + (b-1), \dots, s_K + (K-1)(b-1))$ for sorted $s_1 < \dots < s_K \in [T']$ is a bijection from $\binom{[T']}{K}$ to the set of valid $b$-separated subsets of $[T]$ of size $K$.
\end{proposition}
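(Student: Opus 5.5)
The plan is to write $\phi$ coordinatewise as $t_j = s_j + (j-1)(b-1)$ for $j = 1, \dots, K$, and to define an explicit candidate inverse $\psi(t_1, \dots, t_K) = (t_1, t_2 - (b-1), \dots, t_K - (K-1)(b-1))$ on sorted $b$-separated subsets $t_1 < \dots < t_K$ of $[T]$. The proof then has three parts: show that $\phi$ lands in the set $\mathcal{V}$ of valid $b$-separated $K$-subsets of $[T]$ from \cref{alg:balanced_min_sep_def}; show that $\psi$ lands in $\binom{[T']}{K}$; and observe that $\phi \circ \psi$ and $\psi \circ \phi$ are the identity. The last part is immediate, since both maps act coordinatewise by adding or subtracting the same index-dependent shift. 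Throughout, a subset is identified with its sorted tuple, so both maps are well defined on sets.

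\textbf{Well-definedness of $\phi$.} For sorted integers $s_1 < \dots < s_K$ in $[T']$, consecutive images satisfy
\[
t_{j+1} - t_j = (s_{j+1} - s_j) + (b-1) \ge 1 + (b-1) = b .
\]
Hence the $t_j$ are strictly increasing and pairwise separated by at least $b$. The condition $|t_i - t_j| \ge b$ for non-consecutive pairs follows by summing the consecutive gaps. For the range, $t_1 = s_1 \ge 1$ and $t_K = s_K + (K-1)(b-1) \le T' + (K-1)(b-1) = T$. So $\phi(S)$ is a $K$-element $b$-separated subset of $[T]$.

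\textbf{Well-definedness of $\psi$.} Given a valid sorted $t_1 < \dots < t_K$ in $[T]$ with consecutive gaps at least $b$, set $s_j = t_j - (j-1)(b-1)$. Then
\[
s_{j+1} - s_j = (t_{j+1} - t_j) - (b-1) \ge 1 ,
\]
so the $s_j$ are strictly increasing integers, and in particular distinct. For the range, $s_1 = t_1 \ge 1$ and $s_K = t_K - (K-1)(b-1) \le T - (K-1)(b-1) = T'$. So $\psi$ maps into $\binom{[T']}{K}$.

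\textbf{Main subtlety.} The only real care needed is the ordering convention. Separation of a set is equivalent to separation of consecutive elements in sorted order, and $\phi$ preserves sortedness because each gap grows by $b-1 \ge 0$. Once this is noted, the inverse checks are mechanical, and the hypothesis $T \ge Kb$ guarantees $T' \ge K$, so both sets are nonempty. As a corollary, a uniform $K$-subset of $[T']$ pushed through $\phi$ is uniform on $\mathcal{V}$, which gives the efficient sampler for \minsepra.
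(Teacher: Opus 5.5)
Your proof is correct. The explicit inverse $t_j \mapsto t_j - (j-1)(b-1)$, the fact that consecutive gaps shift by exactly $b-1$, and the exact correspondence of the range constraints $s_1 \ge 1$, $s_K \le T'$ with $t_1 \ge 1$, $t_K \le T$ are all that is needed. The paper gives no written proof: it states the proposition and later invokes it as an ``observation'' in the proof of \cref{thm:msra-dp}. Your argument is therefore exactly the routine verification the paper leaves implicit, and your appeal to $T \ge Kb$ for nonemptiness is harmless but not needed for bijectivity.
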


We analyze the privacy of \minsepra via Monte Carlo accounting in \cref{sec:recurrences}. While \minsepra has higher entropy than e.g. \nestedra, it has one notable drawback:

\begin{remark}[Expected Batch Sizes]
\label{rem:batch_size_stability}
For \minsepra, the expected batch size varies across steps: near the beginning and end of training, the expected batch size is larger. Similar to the warm-start correction used for \minseppoisson \citep{dong2025bminsep}, there is a natural correction to the batch selection strategy which fixes this behavior. See \cref{sec:biased_batches} for more details.
\end{remark}

\subsection{Privacy Analyses}
\label{sec:recurrences}

We now provide formal statements for privacy analysis of our proposed batch selection strategies.

\paragraph{Cyclic Random Allocation} 
Theorem 5 of \cite{choquettechoo2023amplified} gives the privacy analysis:

\begin{theorem}
$b$-\bandmf using $K$-out-of-$T$ cyclic random allocation satisfies any privacy guarantee satisfied by \dpsgd using $K$-out-of-$T/b$ random allocation.
\end{theorem}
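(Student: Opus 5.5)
The plan is to reduce the $T$-step \bandmf mechanism under cyclic random allocation to $T/b$-step \dpsgd with $K$-out-of-$T/b$ random allocation by a post-processing argument, following the structure of Theorem 5 of \cite{choquettechoo2023amplified}. By \cref{lem:dimensionreduction} it suffices to consider one example, assigned to residue class $i$. Its participation vector $\bfx$ is supported on the eligible set $\mathcal{E}_i=\{t: t\equiv i \pmod b\}$, which has $m := T/b$ elements. Write $\bfx = E_i \bfy$, where $E_i\in\{0,1\}^{T\times m}$ embeds $\mathbb{R}^m$ onto the coordinates in $\mathcal{E}_i$. Then $\bfy$ is exactly a uniformly random $K$-subset indicator of $[m]$, i.e.\ the participation vector of $K$-out-of-$m$ random allocation. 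We must compare $P=\bfC E_i\bfy+\bfz$ with $Q=\bfz$.

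\textbf{Key step.} Because $\bfC$ is $b$-banded and lower triangular, column $t$ of $\bfC$ is supported on rows $t,\dots,t+b-1$. The columns indexed by $\mathcal{E}_i$ are therefore supported on pairwise disjoint row blocks $R_j=\{i+(j-1)b,\dots,i+jb-1\}$ (truncated at $T$). Moreover, by the Toeplitz assumption they are shifts of the same first column $\bfc$, which has $\ltwo{\bfc}=1$ (the last block may be truncated, giving norm at most $1$). So $\bfC E_i$ has orthogonal columns of norm at most $1$.

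\textbf{Main step: rotation argument.} Let $U$ be an orthogonal matrix whose first $m$ rows are the normalized columns $u_j=\bfC e_{t_j}/\ltwo{\bfC e_{t_j}}$, completed arbitrarily. Since $\bfz$ is rotation invariant, $UP$ has coordinates $(a_j\bfy_j + \bfz'_j)_{j\le m}$ followed by pure noise, where $a_j\le 1$ and $\bfz'\sim\calN(\boldzero,\sigma^2\bfI)$. Likewise $UQ$ is pure noise. The trailing coordinates are independent of $\bfy$ and identical under $P$ and $Q$, so they can be dropped. Because $U$ is invertible, hockey-stick divergences are preserved, and the problem reduces to comparing $(a_j\bfy_j+\bfz'_j)_j$ with $\bfz'$.

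When all $a_j=1$, this is exactly the $m$-step \dpsgd random-allocation dimension-reduced pair, so the divergences coincide in both directions. This is the main obstacle: the truncated last block gives $a_m<1$. I would handle it by post-processing. Scaling a coordinate of sensitivity $1$ down to $a_m$ can be simulated by multiplying that coordinate by $a_m$ and adding independent $\calN(0,\sigma^2(1-a_m^2))$ noise. This maps the \dpsgd pair $(\bfy+\bfz',\bfz')$ to our pair via a single Markov kernel applied to both distributions.

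\textbf{Conclusion.} By data processing, the \bandmf divergences are bounded by the \dpsgd ones for every $\alpha$, in both orders. The partition into classes is arbitrary but fixed across neighbors, so the bound holds for each example. Hence any $(\varepsilon,\delta)$ guarantee of \dpsgd with $K$-out-of-$T/b$ random allocation transfers to $b$-\bandmf with cyclic random allocation.
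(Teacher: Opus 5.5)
Your proposal is correct and is essentially the post-processing reduction behind Theorem 5 of \cite{choquettechoo2023amplified}, which the paper cites rather than reproves. Because the eligible columns of the $b$-banded Toeplitz $\bfC$ have disjoint row supports and norm at most $1$, the \bandmf dominating pair is the image of the $K$-out-of-$T/b$ random-allocation \dpsgd pair under a single Markov kernel; you rightly work with the joint law of the participation vector rather than per-step composition, and your extra-Gaussian-noise kernel handles the truncated final column. One small nit: \cref{lem:dimensionreduction} as restated assumes identically distributed participation vectors, which fails across residue classes, so you are implicitly using its per-example sufficiency form. That is harmless, but worth stating explicitly.
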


The privacy guarantees for \dpsgd with random allocation can be computed using the results of \cite{feldman2026efficientprivacylossaccounting, dong2026optimal}.

\paragraph{Nested Random Allocation} 

The analysis of \cite{choquettechoo2023amplified} assumes the modulus is public and doesn't benefit directly from randomization of it. Instead of using their analysis, we provide the necessary primitives for Monte Carlo accounting: (i) being able to sample $\bfC \bfx + \bfz$ efficiently, where $\bfx, \bfz$ are defined as in \cref{lem:dimensionreduction}, and (ii) being able to evaluate $P(\bfy)/Q(\bfy)$ for an arbitrary vector $\bfy$, where $P = \bfC \bfx + \bfz$ and $Q = \bfz$. (i) is straightforward, so we focus on (ii).

We cannot evaluate $P/Q$ via naive summation as there are $b \cdot \binom{T/b}{K}$ possible values for $\bfx$. We instead follow the framework of \cite{dong2025bminsep}: we determine a recursive structure that allows us to evaluate $P/Q$ via a dynamic program. Let $\bfc = \bfC[1:b, 1]$, and let $\ell_t := \frac{\calN(\bfc, \sigma^2 \bfI)(\bfy[t:t+b-1])}{\calN(\boldzero, \sigma^2 \bfI)(\bfy[t:t+b-1])}$\footnote{We abuse notation here for readability; if $t > T - b + 1$, instead we are looking at the last $T - t + 1 < b$ coordinates of $\bfy$, and the first $T - t + 1$ coordinates of $\bfc$.}. That is, $\ell_t$ is the likelihood ratio of $\bfy[t:t+b-1]$, assuming participation in step $t$. Let $\textsc{LR}(k, t)$ denote the likelihood ratio of the length $T-t+1$ suffix $\bfy[t:T]$, conditioned on (i) the example not having participated in iterations $t-b+1$ to $t-1$, (ii) the modulus of the example being $\equiv t \pmod b$, and (iii) the example participates $k$ times in iterations $t$ to $T$. Our base cases are $\textsc{LR}(k, t) = 1$ if either $t > T$ or $ k = 0$. Letting $p_{k,t} := \frac{k}{\lceil(T + 1 - t)/b\rceil}$ our recurrence is:
\begin{equation}
\textsc{LR}(k, t) = p_{k,t} \cdot \ell_t \cdot \textsc{LR}(k-1, t+b) + \left(1 - p_{k,t}\right) \cdot \textsc{LR}(k, t+b).\label{eq:nra-dp}
\end{equation}

\begin{theorem}\label{thm:nra-dp}
    Using Eq. \eqref{eq:nra-dp}, $\frac{1}{b} \sum_{i=1}^b \textsc{LR}(K, i)$ is equal to $P(\bfy)/Q(\bfy)$ for $b$-\bandmf with \nestedra and can be computed in time $O(T(K + \log T))$.\footnote{We remark that a ``screening'' technique was introduced by \cite{dong2026optimal}, that makes the runtime per sample closer to $O(T \log T)$. A similar technique can be used to speed up Monte Carlo accounting using \cref{thm:nra-dp}. Since the changes are analogous to the work of \cite{dong2026optimal}, we omit the details here.}
\end{theorem}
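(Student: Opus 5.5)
The plan is to write $P(\bfy)/Q(\bfy)$ as an average over participation vectors, then show the recurrence evaluates that average one residue class at a time.

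\textbf{Step 1 (factorization).} By definition, $P(\bfy)/Q(\bfy) = \mathbb{E}_{\bfx}\bigl[\calN(\bfC\bfx,\sigma^2\bfI)(\bfy)/\calN(\boldzero,\sigma^2\bfI)(\bfy)\bigr]$. Fix an $\bfx$ in the support of \nestedra, with participation set $S=\{s_1<\dots<s_K\}$ inside a single residue class. Consecutive elements satisfy $s_{j+1}-s_j\ge b$. Because $\bfC$ is $b$-banded Toeplitz, $\bfC\bfx=\sum_j \bfC e_{s_j}$, and $\bfC e_{s_j}$ is supported on $[s_j,s_j+b-1]$, where it equals $\bfc$ (truncated near $T$). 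These windows are disjoint, so the Gaussian likelihood ratio, which is a product over coordinates, factors as $\prod_j \ell_{s_j}$. This gives
\begin{equation*}
\frac{P(\bfy)}{Q(\bfy)}=\frac1b\sum_{i=1}^b \mathbb{E}_{S\sim \mathrm{Unif}\binom{\mathcal{E}_i}{K}}\Bigl[\prod_{s\in S}\ell_s\Bigr],
\end{equation*}
where $\mathcal{E}_i$ is the residue class chosen uniformly in \cref{alg:nested_ra_def}.

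\textbf{Step 2 (recurrence by induction).} Define $F(k,t)$ as the expectation of $\prod_{s\in S}\ell_s$ when $S$ is a uniform $k$-subset of $\{t,t+b,\dots\}\cap[T]$. That set has $m=\lceil (T+1-t)/b\rceil$ elements. I will show $F(k,t)=\textsc{LR}(k,t)$ by backward induction on $t$. In the base cases, $k=0$ gives the empty product $1$. For $t>T$ we only need $k=0$, since $p_{k,t}$ forces $k\le m$ along every reachable branch. For the inductive step, condition on whether $t\in S$. Under a uniform $k$-subset of $m$ elements, $\Pr[t\in S]=k/m=p_{k,t}$. Given $t\in S$, the rest of $S$ is a uniform $(k-1)$-subset of $\{t+b,\dots\}$. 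Given $t\notin S$, it is a uniform $k$-subset of that same set. This reproduces \eqref{eq:nra-dp} exactly. Combining with Step 1 yields $\frac1b\sum_i \textsc{LR}(K,i)$.

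I should also check that conditions (i)--(iii) in the definition of $\textsc{LR}$ match this interpretation. Condition (i), non-participation in the preceding $b-1$ steps, holds automatically because the modulus is fixed. So $\textsc{LR}$ is exactly the suffix likelihood ratio described. The one subtlety is the truncation of $\ell_t$ near the end of the horizon. I will handle it by noting that $\bfC e_t$ is then simply the truncated $\bfc$, so the factorization still holds.

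\textbf{Step 3 (runtime).} There are $T$ pairs $(t,\text{residue})$, and each needs $k\in\{0,\dots,K\}$. The dynamic program therefore has $O(TK)$ states, each computed in $O(1)$ time, for $O(TK)$ total. The main cost is computing all $\ell_t$. Each $\ell_t=\exp\bigl((\langle \bfc,\bfy[t:t+b-1]\rangle-\|\bfc\|^2/2)/\sigma^2\bigr)$, using the truncated norms near $T$. The inner products $\langle \bfc,\bfy[t:t+b-1]\rangle$ for all $t$ form a cross-correlation, which the FFT computes in $O(T\log T)$. The total is $O(T(K+\log T))$.

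In practice one would work in log space (log-sum-exp) for numerical stability, but this does not affect the bound. I expect the main obstacle to be Step 1: making the disjoint-support factorization rigorous, including the edge truncation. The rest is a routine conditioning argument.
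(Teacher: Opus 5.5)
Your proposal is correct and follows essentially the same route as the paper. You condition on participation at step $t$ with probability $k/\lceil (T+1-t)/b\rceil$, use $b$-separation with the banded Toeplitz structure so the window $[t,t+b-1]$ contributes exactly $\ell_t$, recurse on the suffix, dispose of $t>T$ via $p_{k,t}=1$ zeroing infeasible branches, average over the $b$ moduli, and bound the runtime by FFT cross-correlation plus $O(KT)$ table entries. Your explicit factorization $P(\bfy)/Q(\bfy)=\frac{1}{b}\sum_i \mathbb{E}_S\big[\prod_{s\in S}\ell_s\big]$, together with the uniform-subset conditioning argument, simply makes rigorous the paper's informal claim that the suffix likelihood ratio is independent of past history given the remaining participation count.
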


Intuitively, the first (resp. second) term captures the probability of participating (resp. not participating) in step $t$, the likelihood ratio $\ell_t$ (resp. 1) of $\bfy[t:t+b-1]$, and the likelihood ratio of the remaining suffix with $k-1$ (resp. $k$) remaining participations.  

\begin{proof}[Proof of \cref{thm:nra-dp}]

To prove correctness of the recurrence, we first start with the base cases: If $k = 0$ we have $\bfx = 0$ in all remaining coordinates, hence the suffixes of $\bfC \bfx  + \bfz, \bfz$ are distributionally equivalent. For the case $t > T$, note that when the remaining rounds $t$ to $T$ can only accommodate $k$ more participations, we will always have $p_{k,t} = 1$, i.e. the second term in \cref{eq:nra-dp} is $0$. This means that the only way to reach any base cases where $t > T$ and $k > 0$ go through a path through the dynamic programming table where the base case is eventually multiplied by 0. In turn, any base case value for $t > T$ gives correctness.

Now, to prove the recurrence, there are two cases. The first, which happens with probability $\frac{k}{\lceil(T + 1 - t)/b\rceil}$ ($\lceil(T + 1 - t)/b\rceil$ is the number of possible iterations the example can participate in starting from iteration $t$), is that the example participates in round $t$. In this case, for $P$ we have $(\bfC \bfx)[t:t+b-1] = \bfc$, hence the likelihood ratio of coordinates $t$ to $t+b-1$ is exactly $\ell_t$. After iteration $t+b-1$, there are $k-1$ participations remaining, and the likelihood ratio of the subsequent coordinates is independent of all past history once we condition on $k-1$ remaining participations, so $\textsc{LR}(k-1, t+b)$ correctly gives the likelihood ratio for the remaining coordinates. In the second case, where the example doesn't participate in round $t$, we derive the remaining term $\left(1 - \frac{k}{\lceil(T + 1 - t)/b\rceil}\right) \cdot \textsc{LR}(k, t+b)$ analogously.

Our overall likelihood ratio is $\frac{1}{b} \sum_{i=1}^b \textsc{LR}(K, i)$; for each modulus $i$, $\textsc{LR}(K, i)$ gives the likelihood ratio of the entire output conditioned on that modulus (the likelihood ratio of iterations before $i$ is $1$), and each modulus is equally likely.

For the runtime, as we assume $\bfC$ is Toeplitz, each $\ell_t$ can be calculated by taking sliding dot products of $\bfc$ and $\bfy$, which can be done in $O(T \log T)$ time using FFT, and by computing the norm of each window  $\bfc[t:t+b-1]$, which can be done in $O(b)$ time as we assume $\bfC$ is Toeplitz. The recurrence takes $O(1)$ time to compute after pre-computing all these values, and there are $KT$ values to compute the recurrence for.
\end{proof}

\paragraph{Min Sep Random Allocation} 

Due to structural similarities between \minsepra and \nestedra's recursive structure, Monte Carlo accounting for \minsepra is analogous to Monte Carlo accounting for \nestedra. Define $\bfc, \ell_t$ as in the setup for Eq. \eqref{eq:nra-dp}. Let $\textsc{LR}(k, t)$ be the likelihood ratio for $\bfy[t:T]$, conditioned on $k$ participations happening in iterations $t$ to $T$, and the example not participating in iterations $t-b+1, \ldots, t-1$. We let $p_{k,t} := \frac{k}{T - t + 1 - (k-1)(b-1)}.$
We have, similarly to \cref{eq:nra-dp}, the base case $\textsc{LR}(k, t) = 1$ if $k = 0$ or $t > T$ and:
\begin{equation}
\textsc{LR}(k, t) = p_{k,t} \cdot \ell_t \cdot \textsc{LR}(k-1, t+b) + \left(1 - p_{k,t}\right) \cdot \textsc{LR}(k, t+1).\label{eq:msra-dp}
\end{equation}

\begin{theorem}\label{thm:msra-dp}
Using Eq. \eqref{eq:msra-dp}, $LR(K, 1)$ equals $P(\bfy)/Q(\bfy)$ for $b$-\bandmf with \minsepra and can be computed in time $O(T(K + \log T))$.
\end{theorem}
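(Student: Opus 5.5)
We mirror the proof of \cref{thm:nra-dp}, replacing the residue-class structure of \nestedra with the sequential structure of \minsepra given by \cref{prop:bijection}. The key fact is the conditional law of the next participation. Fix $t$ and $k\ge 1$, and condition on the example having exactly $k$ participations in $[t,T]$ with none in $[t-b+1,t-1]$. Then the remaining schedule is uniform over $b$-separated $k$-subsets of $[t,T]$. This holds because \minsepra is uniform over valid sets, and conditioning a uniform distribution on a prefix event leaves it uniform on the compatible suffixes. The $b$-separation constraint with earlier participations is automatically satisfied, since none occur in the last $b-1$ steps before $t$.

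By \cref{prop:bijection} applied to the interval $[t,T]$, the number of such subsets is $\binom{T-t+1-(k-1)(b-1)}{k}$. The subsets containing $t$ correspond to $b$-separated $(k-1)$-subsets of $[t+b,T]$. Their number is $\binom{T-t+1-b-(k-2)(b-1)}{k-1}=\binom{M-1}{k-1}$, where $M=T-t+1-(k-1)(b-1)$. Hence the probability of participating at $t$ is $\binom{M-1}{k-1}/\binom{M}{k}=k/M=p_{k,t}$.

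Next we condition on this event and use the banded structure. If the example participates at $t$, then the next participation is at time $\ge t+b$. Since $\bfC$ is $b$-banded Toeplitz, we have $(\bfC\bfx)[t:t+b-1]=\bfc$, and no other participation contributes to these coordinates: earlier ones end before $t$, and later ones begin at $\ge t+b$. The suffix beyond $t+b-1$ is then governed by the conditioned distribution at state $(k-1,t+b)$, with no participation in $[t+1,t+b-1]$. Because $\bfz$ has independent coordinates, the density ratio factors as $\ell_t\cdot\textsc{LR}(k-1,t+b)$.

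If the example does not participate at $t$, then coordinate $t$ of $\bfC\bfx$ is $0$. This uses the absence of participations in $[t-b+1,t]$, so the ratio for coordinate $t$ is $1$. The new state is $(k,t+1)$, and the condition ``no participation in $[t-b+2,t]$'' holds. Summing the two branches by total probability, which is valid since $P(\bfy)/Q(\bfy)=\mathbb{E}_{\bfx}[\prod(\text{ratios})]$, gives Eq.~\eqref{eq:msra-dp}.

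The base cases are handled as in \cref{thm:nra-dp}. When $k=0$, the suffixes of $P$ and $Q$ agree in distribution, so the ratio is $1$. When $M=k$, i.e. the schedule is forced, $p_{k,t}=1$, so infeasible states with $t>T$ and $k>0$ are reached only with weight $0$. At $t=1$ the ``no earlier participation'' condition is vacuous, so $\textsc{LR}(K,1)=P(\bfy)/Q(\bfy)$.

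The truncated windows near the end, where $t>T-b+1$, need a brief check: a participation there touches only the last $T-t+1$ coordinates, which matches the footnoted definition of $\ell_t$. For the runtime, all $\ell_t$ are computed by FFT sliding dot products in $O(T\log T)$ time plus $O(b)$ prefix norms of $\bfc$. The table has $O(KT)$ entries, each computed in $O(1)$.

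The main obstacle is justifying that after each branch the conditional distribution of the suffix is again uniform on the corresponding smaller instance, with the correct ``gap'' condition, so that the same $\textsc{LR}$ function applies. We handle this by the explicit counting argument above, using \cref{prop:bijection}.
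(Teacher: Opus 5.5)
Your proof is correct and takes the same route as the paper. Like the paper, you use conditional uniformity of the suffix given the state, apply \cref{prop:bijection} to get $p_{k,t}=k/M$, carry over the two-branch decomposition from the proof of \cref{thm:nra-dp}, and handle the base cases through the forced-schedule case $p_{k,t}=1$. You also spell out details the paper leaves implicit, such as the explicit ratio $\binom{M-1}{k-1}/\binom{M}{k}$ and why the state $(k,t)$ captures all relevant history.
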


The intuition is analogous to \cref{eq:nra-dp}. Besides the definition of $p_{k,t}$ the only difference is that the second term includes $\textsc{LR}(k, t+1)$ instead of $\textsc{LR}(k, t+b)$, as we no longer restrict participation based on an example's sampled modulus. The first term still uses $\textsc{LR}(k-1, t+b)$ due to enforcement of the $b$-separation constraint.

\begin{proof}[Proof of \cref{thm:msra-dp}]
To prove correctness, the base cases are analogous to \cref{thm:nra-dp}. Conditioned on the example not participating in iterations $t-b+1, \ldots, t-1$, the example participates in $k$ of the remaining $T - t + 1$ iterations while satisfying the $b$-separation property. By the observation that there is a bijection between subsets of $[T]$ of size $K$ satisfying the $b$-separation property, and subsets of $[T-(K-1)(b-1)]$ of size $K$, the probability of the example participating in iteration $t$ is $\frac{k}{T-t+1-(k-1)(b-1)}$. The two terms for when the example does or not does not participate can then be proven as in the proof of \cref{thm:nra-dp}.
\end{proof}

\subsection{Implementation Details} \label{sec:implementation}

\begin{algorithm}[h]
\caption{Monte Carlo Sampling and Scoring \citep{wang2023randomized, dong2025bminsep}}
\label{alg:accounting}
\begin{algorithmic}[1]
\REQUIRE $b$-banded strategy matrix $\bfC \in \mathbb{R}^{T \times T}$, noise multiplier $\sigma$, batch distribution $\mathcal{D}_K$
\STATE Sample sparse participation indicator, $\bfx \sim \mathcal{D}_K, \bfx \in \{0, 1\}^T$ 
\STATE Sample standard Gaussian noise $\bfz \sim \mathcal{N}(\boldzero, \bfI_T)$
\STATE Form observations: $\bfy^+ = \bfC \bfx + \sigma \bfz$ \quad and \quad $\bfy^- = \sigma \bfz$
\STATE Compute scores: $\Lambda^+ \leftarrow \text{ScoreDP}(\bfy^+, \bfC, \sigma, K)$ \quad and \quad $\Lambda^- \leftarrow \text{ScoreDP}(\bfy^-, \bfC, \sigma, K)$
\RETURN $\{\Lambda^+, \Lambda^-\}$
\end{algorithmic}
\end{algorithm}

A high-level algorithm for Monte Carlo Sampling and Scoring for \bandmf is shown in \cref{alg:accounting}. In our actual implementation, we exploit the structure of \bandmf and our proposed batch selection strategies to greatly improve efficiency over a baseline implementation:

\begin{itemize}
\item \textbf{Exploiting $b$-bandedness and $b$-separation} By leveraging the fact that $\bfC$ is a $b$-\emph{banded Toeplitz} matrix and and $\mathcal{D}_K$ samples $b$-separated participations, we can efficiently compute $\bfC\bfx$ in linear time using simple array indexing tricks. Specifically, the $b$-separation guarantees that each participation's contribution is completely disjoint, so $\bfC \bfx$ is simply a zero vector with the $b$ Toeplitz coefficients  $\bfc := \bfC[1:b, 1]$ placed at each index $\{j: \bfx_j = 1\}$ without any collisions.
\item \textbf{Exploiting Fixed-$K$ Structure} Since $|S|=K$ is static under our batch selection strategies, all intermediate variables in \cref{alg:accounting} have static shapes, enabling JIT compilation and efficient GPU-based processing in JAX, which are more efficient than CPU-based numpy implementations.
\item \textbf{Batching and Accumulation} Monte Carlo accounting requires generating $\Omega(\frac{1}{\delta})$ samples for formal verification. To do this efficiently, we leverage \texttt{jax.vmap} and \texttt{jax.lax.scan} to vectorize the computation of many samples and accumulate them into a size $50K$ histogram indexed by privacy loss values linearly spaced out in the range $[-20, 30]$. To ensure the reported privacy guarantee remains formally valid, we round privacy loss values up to the nearest histogram bin value. This combination of higher-order functions allows us to fully saturate the hardware while keeping memory manageable even for very large sample sizes, all while keeping the primary sampling and scoring code specialized to a single sample.
\item \textbf{Carefully Optimized Dynamic Program} Evaluating Eqs. \eqref{eq:nra-dp} and \eqref{eq:msra-dp} requires computing per-step Gaussian log-likelihood ratios $\ell_t$, which in turn requires computing $\langle\bfc, \bfy[t: t+b-1]\rangle$ and $\langle\bfc, \bfc\rangle$, and sweeping a backward recurrence over $k \in \{0, \dots, K\}$.
We compute all $T$ inner products simultaneously in $O(T \log T)$ time using \texttt{jax.scipy.signal.fftconvolve}. Computing $\langle\bfc, \bfc\rangle$ is straightforward and fast. 
Finally, our recurrences at step $t$ only rely on future steps $t+1$ and $t + b$.
Rather than allocating a full $T \times (K+1)$ table in device memory, we implement the backward sweep using a circular rolling buffer of size $b \times (K+1)$.
\end{itemize}

Together, these optimizations enable us to generate $\approx 10^6$ privacy loss samples per second (\cref{fig:throughput}), a $50$-$100\times$ improvement over prior work, making Monte Carlo accounting practical for smaller $\delta$.

\section{Optimizing \bandmf with MC Accounting}
\label{sec:optimization}

Our goal is to optimize the \bandmf strategy matrix $\bfC \in \mathbb{R}^{T \times T}$ while accounting for amplification. The standard objective for \dpmf, as in \cite{denisov2022improved}, is the \emph{expected total squared error} on a workload matrix $\bfA$ (typically the ``prefix sum'' all-ones lower-triangular matrix):
\begin{equation}
    \text{TSE}(\bfC) \;=\; \|\bfA \bfC^{-1}\|_F^2 \cdot \sigma_{\varepsilon, \delta}^2(\bfC)
    \label{eq:joint_opt_objective}
\end{equation}
where $\|\bfA \bfC^{-1}\|_F$ computes the matrix mechanism's $\ell_2$-error for $\sigma = 1$, and $\sigma_{\varepsilon,\delta}(\bfC)$ is the minimum noise multiplier required for the resulting \bandmf mechanism to satisfy $(\varepsilon, \delta)$-DP.

For cyclic strategies privacy depends on $\bfC$ only through its maximum column norm and bandwidth $b$.
As a result, $\bfC$ can be optimized independently for different values of $b$, and $\sigma$ can be calculated efficiently using standard numerical methods. This decoupling does not apply for non-cyclic strategies analyzed under Monte Carlo accounting, where the privacy depends on the specific contents of $\bfC$. 
Hence, $b$ and $\bfC$ must be optimized jointly with $\sigma$.

Freely optimizing the $b$ parameters of banded Toeplitz $\bfC$ is non-trivial under Monte Carlo accounting.
For simplicity, we adopt the following single-parameter family of Toeplitz strategies proposed in \citet{mckennaScaling2024}: 
$\bfC[i,1] \;\propto\; r + \nicefrac{1}{i},$ for $i = 1, \dots, b$.
In previous work, this parameterization was observed to be near-optimal for the prefix workload.
While this near-optimality may not hold under Monte Carlo accounting, we adopt it as a convenient heuristic to reduce the search space.

The joint optimization over bandwidth and Toeplitz coefficients can be done as a discrete outer search over candidate bandwidths $b \ \le T/K$, and a continuous 1D search over $r \in (-\nicefrac{1}{b}, \nicefrac{1}{2})$ for each candidate $b$.
Because the objective is well-behaved and unimodal in $r$, we solve the 1D subproblem using standard scalar bounded optimization \citep{press1992numerical}.
Evaluating the objective at $(b, r)$ requires estimating $\sigma_{\varepsilon, \delta}(\bfC)$. Since only an estimate is required, we use a much smaller Monte Carlo sample budget ($N \approx \frac{1}{\delta}$) than what we need for \emph{formal calibration}. This gives an approximately optimal candidate $(b^*, r^*)$, for which we perform a final formal calibration of $\sigma$ prior to training.

\section{Experiments} \label{sec:experiments}

We evaluate seven batch selection strategies: four established baselines \deterministic (\texttt{No-Amp}), \cyclicpoisson, \ballsinbins, and \minseppoisson and our \cyclicra, \nestedra, and \minsepra.
To ensure a fair comparison, all methods are normalized to $K$ expected (or exact) participations per user over $T$ iterations.
We use a different banded Toeplitz strategy tailored for each batch selection strategy.
For \deterministic, we optimize RMSE for $b = \frac{T}{K}$, and heuristically apply this same Toeplitz strategy to \ballsinbins.
For \cyclicpoisson and \cyclicra, we search $b \in \{1, 2, \dots, \frac{T}{K}\}$ to find the optimal $(b^*, \bfC^*)$ minimizing TSE (noting that $b^*$ can differ between the two), and heuristically adopt these strategies for \minseppoisson and the standard variants of \nestedra and \minsepra, respectively. We use the approach described in \cref{sec:optimization} for the optimized variants of these strategies.

\paragraph{Accounting Details}
For methods evaluated via Privacy Loss Distribution (PLD) accounting (\deterministic, \cyclicpoisson, and the PLD bound for \cyclicra), we use the \texttt{dp\_accounting} library to calibrate $\sigma$ to the target $(\varepsilon, \delta)$ guarantee.
For methods evaluated via Monte Carlo accounting (\ballsinbins, \minseppoisson, \nestedra, and \minsepra), we formally calibrate the required noise multiplier $\sigma^*$ using the descending search procedure from Algorithm~5 of \citet{dong2025bminsep}.
For our new methods, we initialize $\sigma_0$ to the calibrated value under PLD accounting for \cyclicra, and sequentially steps downward by a factor of $1.01$ ($\sigma_{k+1} = \sigma_k / 1.01$) until reaching the first candidate noise multiplier that fails the verification certificate.
To certify $(\varepsilon, \delta)$-DP at our target $\delta = 10^{-5}$, we draw exactly $N = 10{,}745{,}968$ Monte Carlo samples per candidate $\sigma$. We run 3 independent random trials for each Monte Carlo calibration. 
This rigorous certification ensures that the calibrated noise levels (and resulting RMSE metrics) reported in our evaluation are provable upper bounds on privacy loss rather than just empirical point estimates. In figures and tables, Monte Carlo methods report the mean across the 3 independent calibration trials; the sample standard deviation is consistently $\le 0.01$ and thus statistical variance does not meaningfully affect any comparisons we make.

\paragraph{Batch Selection Strategy Configuration}

In \cref{tab:participation_normalization} we provide the settings of each batch selection strategy we use to normalize expected participations to $K$ to ensure fair comparisons.

\begin{table}[h]
\centering
\caption{Parameter settings for each batch selection strategy to achieve $K$ expected participations over $T$ iterations with minimum separation $b$.}
\label{tab:participation_normalization}
\begin{tabular}{@{}lccl@{}}
\toprule
Method & Participations & Native parameter & Setting for target $K$ \\
\midrule
\cyclicra & Exactly $K$ & $K$ directly & N/A \\
\nestedra & Exactly $K$ & $K$ directly & N/A \\
\minsepra & Exactly $K$ & $K$ directly & N/A \\
\ballsinbins & Exactly $K$ & Cycle length $L$ & $L = T/K$ \\
\cyclicpoisson & Expected K & Sampling prob.\ $p$ & $p = Kb/T$ \\
\minseppoisson & Expected K & Sampling prob.\ $p$ & $p = \frac{K}{T - K(b-1)}$ \\
\deterministic & Exactly $K$ & Cycle length $L$ & $L=T / K$ \\
\bottomrule
\end{tabular}
\end{table}

\paragraph{Evaluation}

We compare the batch selection strategies, while varying number of iterations ($T$), number of participations per example ($K$) and privacy budgets ($\varepsilon$). We center the experiments around the \emph{anchor} config $T=1024$, $K=4$, and $\varepsilon=4$, and vary each parameter one at a time. We chose this config because it represents a standard practical setting for large-scale DP training, where single-digit privacy budgets are common and high-epoch training is impractical. Our primary evaluation metric is the (normalized) RMSE, defined as $\text{RMSE} =\frac{1}{K} \sqrt{\nicefrac{\text{TSE}}{T}}$.

\subsection{RMSE Experiments}

\begin{table}[t]
\centering
\caption{Table of results showing the RMSE of each batch selection strategy normalized by \cyclicpoisson (values $< 1.00$ indicate improvement). The raw RMSE for \cyclicpoisson is shown in parenthesis for reference.}
\label{tab:rmse_sweep_eps4}
\resizebox{\textwidth}{!}{
\begin{tabular}{@{}clccccccc@{}}
\toprule
 & & \multicolumn{4}{c}{\textbf{Prior Work}} & \multicolumn{3}{c}{\textbf{Proposed (Ours)}} \\
\cmidrule(lr){3-6} \cmidrule(lr){7-9}
 & & \multicolumn{2}{c}{PLD} & \multicolumn{2}{c}{Monte Carlo} & PLD & \multicolumn{2}{c}{Monte Carlo (Std / Opt)} \\
\cmidrule(lr){3-4} \cmidrule(lr){5-6} \cmidrule(lr){7-7} \cmidrule(lr){8-9}
 & Param & \makecell{\texttt{No-Amp}} & \makecell{\texttt{Cyclic} \\ \texttt{Poisson}} & \makecell{\texttt{Balls} \\ \texttt{in Bins}} & \makecell{\texttt{Min Sep} \\ \texttt{Poisson}} & \makecell{\texttt{Cyclic} \\ \texttt{RA}} & \makecell{\texttt{Nested} \\ \texttt{RA}} & \makecell{\texttt{Min Sep} \\ \texttt{RA}} \\
\midrule
 & 256   & 1.14 & 1.00 (1.32) & 0.85 & 0.91 & 0.88 & 0.76 / 0.72 & 0.75 / \textbf{0.68} \\
 & 512   & 1.18 & 1.00 (1.38) & 0.85 & 0.90 & 0.89 & 0.75 / 0.71 & 0.73 / \textbf{0.65} \\
\rowcolor{gray!10}  & 1024  & 1.21 & 1.00 (1.44) & 0.84 & 0.90 & 0.89 & 0.76 / 0.69 & 0.75 / \textbf{0.63} \\
\multirow{-4}{*}{\rotatebox{90}{Vary $T$}} & 2048  & 1.24 & 1.00 (1.50) & 0.83 & 0.90 & 0.89 & 0.74 / 0.67 & 0.73 / \textbf{0.61} \\
\midrule
 & 1     & 1.05 & 1.00 (3.14) & 0.55\footnotemark & 0.90 & 0.89 & 0.72 / \textbf{0.54} & 0.71 / \textbf{0.54} \\
 & 2     & 1.16 & 1.00 (2.04) & 0.69 & 0.90 & 0.88 & 0.69 / 0.61 & 0.68 / \textbf{0.55} \\
\rowcolor{gray!10}  & 4     & 1.21 & 1.00 (1.44) & 0.84 & 0.90 & 0.89 & 0.76 / 0.69 & 0.75 / \textbf{0.63} \\
 & 8     & 1.21 & 1.00 (1.11) & 0.96 & 0.93 & 0.92 & 0.85 / 0.79 & 0.84 / \textbf{0.77} \\
\multirow{-5}{*}{\rotatebox{90}{Vary $K$}} & 16    & 1.17 & 1.00 (0.94) & 1.03 & 0.98 & 0.95 & 0.92 / 0.90 & 0.92 / \textbf{0.87} \\
\midrule
 & 1.0   & 1.75 & 1.00 (3.43) & 1.22 & 0.98 & 0.93 & 0.90 / 0.85 & 0.90 / \textbf{0.80} \\
 & 2.0   & 1.48 & 1.00 (2.17) & 1.03 & 0.94 & 0.91 & 0.82 / 0.77 & 0.81 / \textbf{0.70} \\
\rowcolor{gray!10}  & 4.0   & 1.21 & 1.00 (1.44) & 0.84 & 0.90 & 0.89 & 0.76 / 0.69 & 0.75 / \textbf{0.63} \\
\multirow{-4}{*}{\rotatebox{90}{Vary $\varepsilon$}}  & 8.0   & 1.00 & 1.00 (0.97) & 1.00 & 0.97 & 0.87 & 0.81 / 0.76 & 0.80 / \textbf{0.72} \\
\bottomrule
\end{tabular}}
\end{table}
\footnotetext{At ($K=1, b=T$), \ballsinbins, \nestedra, \minsepra are equivalent. \ballsinbins ($b=T$) outperforms the non-optimized alternatives because they use $b = 64$ (optimal for \cyclicra).}

Our main experimental results are shown in \cref{tab:rmse_sweep_eps4}.  Below we quantify improvements in terms of the \emph{relative RMSE} between two methods, i.e., the ratio between two RMSEs.

\begin{itemize}
\item Among PLD-based methods (which are cheaper to calibrate), \cyclicra consistently provides the best RMSE, beating \cyclicpoisson by between $5$-$14\%$, and about $11\%$ on average. 
\item  Prior to our work, the best batch selection strategy varied across settings. Our proposed \minsepra with optimized batch selection strategy consistently performs the best, followed by \nestedra with an optimized strategy. The magnitude of improvement over \cyclicpoisson ranges from $15\%$ to $85\%$, with improvement growing with larger $T$ and smaller $K$.
\item When using the standard strategy (optimized for \cyclicra), the RMSE between \minsepra and \nestedra is consistently less than $3\%$ in all settings we considered.  When using the optimized strategies, the difference is up to $10\%$ in some settings.
\item The gain between \cyclicra and \nestedra / \minsepra using the same Toeplitz coefficients averages $12\%$.  When optimizing the Toeplitz coefficients, the relative RMSE compared to the standard Toeplitz coefficients averages $8.5\%$ for \nestedra and $13.4\%$ for \minsepra.
\end{itemize}

\begin{figure}[t]
  \centering
  \begin{subfigure}[b]{0.49\textwidth}
    \centering
    \includegraphics[width=\textwidth]{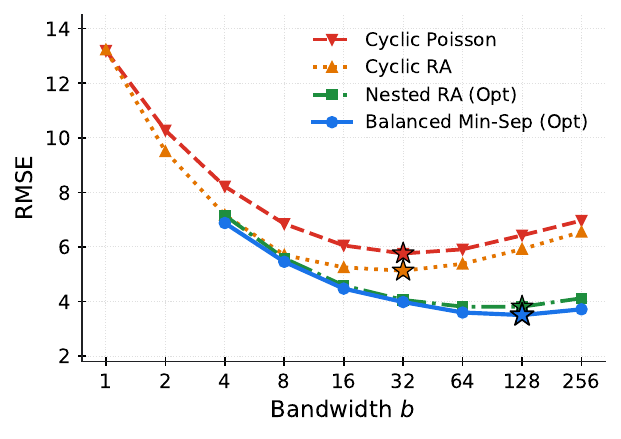}
    \vspace{-2em}
    \caption{}
    \label{fig:rmse_bandwidth}
  \end{subfigure}
  \begin{subfigure}[b]{0.49\textwidth}
    \centering
    \includegraphics[width=\textwidth]{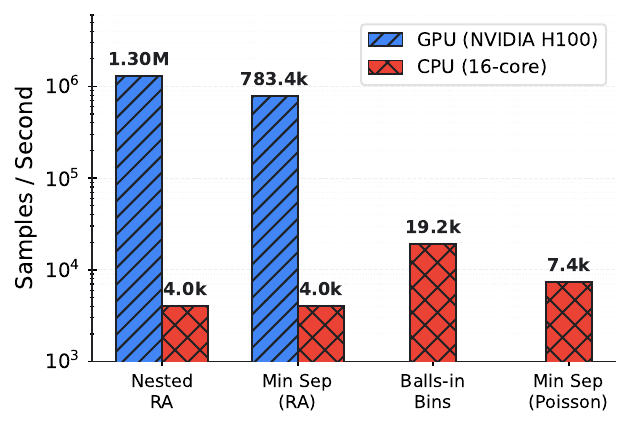}
    \vspace{-2em}
    \caption{}
    \label{fig:throughput}
  \end{subfigure}\hfill
  \caption{(a) RMSE as a function of number of bands $b$ and (b) Sampling throughput of Monte Carlo accounting on CPU and GPU. Both plots show results for $T=1024$, $K=4$, and $\varepsilon=4.0$.}
  \label{fig:throughput_and_rmse}
\end{figure}

In \cref{fig:rmse_bandwidth}, we show the RMSE as a function of the bandwidth $b$ under the anchor configuration. 

\begin{itemize}
\item While \cyclicpoisson and \cyclicra achieve their optimal RMSE at $b = 32$, \nestedra and \minsepra reach their optimum at a $4\times$ larger bandwidth of $b = 128$.
\item While \cyclicpoisson and \cyclicra require delicate bandwidth tuning to balance privacy amplification against correlated noise, both \nestedra and \minsepra consistently benefit from larger bandwidths without having to make the same trade-off.
\end{itemize}

\subsection{Sampling Throughput}

In \cref{fig:throughput} we measure the throughput for Monte Carlo sampling for both our new batch selection strategies, and existing ones under the anchor config.  For existing methods, we use the implementation available in \texttt{JAX Privacy}, and for our methods, we use the one described in \cref{sec:implementation}. 

\begin{itemize}
    \item On CPU, our implementation generates about $4K$ samples per second, while the \texttt{JAX Privacy} implementations enjoy $19.2K$ and $7.4K$ samples per second, respectively. This difference is explained by the extra dimension $(K)$ needed in our dynamic programming state. At this throughput, generating $1$ million samples takes $\approx 4$ minutes on a single CPU.  Recalling that we need $\approx 10M$ samples for a single formal verification at $\delta=10^{-5}$, and an additional sweep over up to $\approx 50$ different $\sigma$ values, this is feasible but expensive and may require parallelization.
    \item The throughput increases substantially when moving to an H100 GPU, up to $1.3M$ samples/second for \nestedra and $783K$ samples/second for \minsepra. This makes it feasible to do both strategy optimization and formal calibration on a single machine in a reasonable amount of time. This is enabled by our highly-optimized implementation described in \cref{sec:implementation}.
\end{itemize}

\section{Discussion and Future Work}
\label{sec:discussion}

In this work, we unified two previously separate lines of research in private training: random allocation for \dpsgd and privacy amplification for banded matrix factorization. By generalizing the $b$-separation and random allocation families into a cohesive framework, our proposed strategies consistently improve RMSE across all settings evaluated, achieving gains over prior PLD-based approaches as large as $1.85\times$. Furthermore, our carefully-optimized implementation makes Monte Carlo accounting orders of magnitude more efficient, transforming it into a more practical tool.

Our results point to several promising directions for further improvement. First, formal calibration currently draws $\Omega(1/\delta)$ samples for each candidate noise multiplier $\sigma$. Narrowing down the initial candidate $\sigma_0$ could reduce the number of evaluated values from roughly $50$ to $2$ or $3$, providing a substantial speedup. Moreover, developing heuristics to directly predict the optimal $(b, \theta, \sigma_0)$ tuple from a given setting $(\varepsilon, \delta, T, K)$ would bypass our expensive $(b, r)$ grid search.
Second, our Toeplitz strategy optimization is somewhat coarse: we consider only powers of $2$ for bandwidth $b$ and restrict Toeplitz coefficients to a single-parameter family. Exploring other parameterizations of the strategy matrix could yield additional utility improvements.

\bibliographystyle{plainnat}
\bibliography{references}

\appendix

\crefalias{section}{appendix}

\section*{AI Use Statement}

In this work, we used generative AI tools for implementing methods and proposing the random rotation construction (\cref{sec:biased_batches}) to resolve the biased batch size issue. We have not used generative AI tools for experimental design, developing our core theoretical framework, or writing the primary proofs, and dataset-related and translation tasks are not applicable to this work. Additionally, we used generative AI tools for writing code, managing code health, generating figures and tables, and editing the manuscript to improve readability. We have reviewed all AI-assisted work, and all LLM-generated code and experimental outputs were verified and tested for correctness by the authors. We take responsibility for the final content of this work, including text, claims, or artifacts produced with the aid of generative AI.

\section{Monte Carlo Background}\label{sec:mc-background}

Below we restate Algorithm 5 of \cite{dong2025bminsep}, which allows for formal calibration of a parameter such as the noise multiplier with end-to-end DP guarantees using Monte Carlo accounting. In the context of \dpmf,  ``dominating pair'' means the pair $P, Q$ given by \cref{lem:dimensionreduction}. In our use-case, each $\calM_i$ is a run of \bandmf with a different noise multiplier, and we have $\calM_1, \calM_2, \ldots$ correspond to a decreasing sequence of noise multipliers.

\begin{algorithm}[h]
\caption{Monte Carlo Calibration}
\begin{algorithmic}[1]\label{alg:mc-calibration}
\REQUIRE Candidate mechanisms $\calM_1, \calM_2, \ldots \calM_k$. Target privacy parameters $(\varepsilon, \delta)$, ``base'' privacy parameter $\delta_{base} \in (0, \delta)$, input dataset $D$.
\STATE $N \leftarrow$ minimum number of samples such that any mechanism not satisfying $(\varepsilon, \delta_{base})$-DP passes Monte Carlo verification of $(\varepsilon, \delta)$-DP w.p. at most $\frac{\delta - \delta_{base}}{1 - \delta_{base}}$ (computable using open-source \texttt{jax\_privacy} utilities).
\STATE $i = 1$
\WHILE{$i \leq k$}
\STATE $S^+ \leftarrow N$ samples from PLD for dominating pair $(P_i, Q_i)$ for $\calM_i$
\STATE $S^- \leftarrow N$ samples from PLD for $(Q_i, P_i)$
\IF{Monte Carlo estimate from $S^+$ (resp. $S^-$) of $H_{e^\varepsilon}(P_i, Q_i)$ (resp. $H_{e^\varepsilon}(Q_i, P_i)$)$> \delta_{base}$}
\STATE \textbf{break}
\ENDIF
\STATE $i = i + 1$
\ENDWHILE
\IF{$i = 1$}
\RETURN $\bot$ (calibration failed)
\ENDIF
\RETURN $\calM_{i-1}(D)$
\end{algorithmic}
\end{algorithm}

\begin{theorem}[Theorem A.1 of \cite{dong2025bminsep}]\label{thm:mc-calibration}
\cref{alg:mc-calibration} (calibration via Monte Carlo accounting and the final run/release of the inner mechanism $\calM_{i-1}$) satisfies $(\varepsilon, \delta)$-DP.
\end{theorem}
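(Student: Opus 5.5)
The plan is to exploit the fact that the index of the released mechanism is chosen using only randomness that is independent of the dataset $D$. The whole procedure is then a data-independent mixture over the candidates $\calM_1, \ldots, \calM_k$ (plus the constant output $\bot$). I would bound the hockey-stick divergence of this mixture by splitting on whether the selected candidate is ``good'' or ``bad''.

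\textbf{Step 1 (data independence of the selection).} Let $J \in \{0, 1, \ldots, k\}$ denote the index $i-1$ at which the loop stops, with $J=0$ meaning $\bot$ is returned. The sets $S^+, S^-$ are drawn from the PLDs of the dominating pairs $(P_i, Q_i)$ and $(Q_i, P_i)$. By \cref{lem:dimensionreduction}, these pairs depend only on $\bfC$, $\sigma_i$ and the batch selection strategy, never on $D$. So the law of $J$ is identical under $D$ and $D'$. The released output is therefore $\calM_J(D)$, with $J$ independent of $D$, and for any measurable set $S$
\[
\Pr[\text{out}(D) \in S] = \sum_{j} \Pr[J=j]\,\Pr[\calM_j(D)\in S].
\]
Here $\calM_0$ is the constant mechanism outputting $\bot$, which is trivially $(0,0)$-DP.

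\textbf{Step 2 (good/bad split).} Let $i^*$ be the smallest index such that $\calM_{i^*}$ is \emph{not} $(\varepsilon,\delta_{base})$-DP; set $i^*=k+1$ if no such index exists. Every $\calM_j$ with $j<i^*$ is $(\varepsilon,\delta_{base})$-DP. The event $B=\{J\ge i^*\}$ can only occur if $\calM_{i^*}$ passes Monte Carlo verification. By the choice of $N$, this has probability at most $\gamma := \frac{\delta-\delta_{base}}{1-\delta_{base}}$. Write $p_B=\Pr[B]\le\gamma$. On the complement $A$, apply the $(\varepsilon,\delta_{base})$ guarantee termwise, and on $B$ bound each probability by $1$:
\[
\Pr[\text{out}(D)\in S] \le \sum_{j<i^*}\Pr[J=j]\bigl(e^\varepsilon\Pr[\calM_j(D')\in S]+\delta_{base}\bigr) + p_B .
\]
This is at most $e^\varepsilon\Pr[\text{out}(D')\in S] + (1-p_B)\delta_{base} + p_B$, after dropping the nonnegative $B$-terms of the right-hand mixture.

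\textbf{Step 3 (arithmetic).} Rewrite $(1-p_B)\delta_{base}+p_B=\delta_{base}+p_B(1-\delta_{base})$. Using $p_B\le\gamma$, this is at most $\delta_{base}+(\delta-\delta_{base})=\delta$. The same argument with $D,D'$ swapped gives the other direction. The verification uses both $S^+$ and $S^-$, so the ``bad'' notion covers failure of either direction of $(\varepsilon,\delta_{base})$-DP, and the $\gamma$ bound applies to each.

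\textbf{Main obstacle.} The delicate point is Step 2's probability bound. $N$ is defined by a per-mechanism guarantee, while the loop tests several mechanisms adaptively. I would resolve this by noting that reaching $J\ge i^*$ requires the specific fixed mechanism $\calM_{i^*}$ to pass its test. That test uses fresh samples, independent of earlier iterations, so conditioning on having reached iteration $i^*$ does not change its pass probability. No union bound over candidates is needed.
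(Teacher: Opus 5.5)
Your proof is correct. The paper gives no argument of its own; it simply imports Theorem~A.1 from \citet{dong2025bminsep}. Your decomposition is the standard argument for that result, and the choice $\gamma = \frac{\delta-\delta_{base}}{1-\delta_{base}}$ in the definition of $N$ is exactly what makes your bound $\delta_{base} + p_B(1-\delta_{base}) \le \delta$ close. Splitting at the first non-$(\varepsilon,\delta_{base})$-DP index is the right way to handle the sequential tests, since it needs neither a union bound over the $k$ candidates nor monotonicity of the privacy guarantee in $\sigma$.
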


\subsection{Sample Complexity of Monte Carlo Accounting}

In \cref{fig:sample_complexity} we plot the number of samples per candidate $\sigma$ to do formal Monte Carlo calibration at different target $\delta$ values. For $\delta=10^{-5}$, we needed $10,745,967$ samples per candidate for formal calibration, which we rounded up to $11,000,000$ for simplicity in our experiments.

\begin{figure}[H]
\centering
\includegraphics[width=0.67\textwidth]{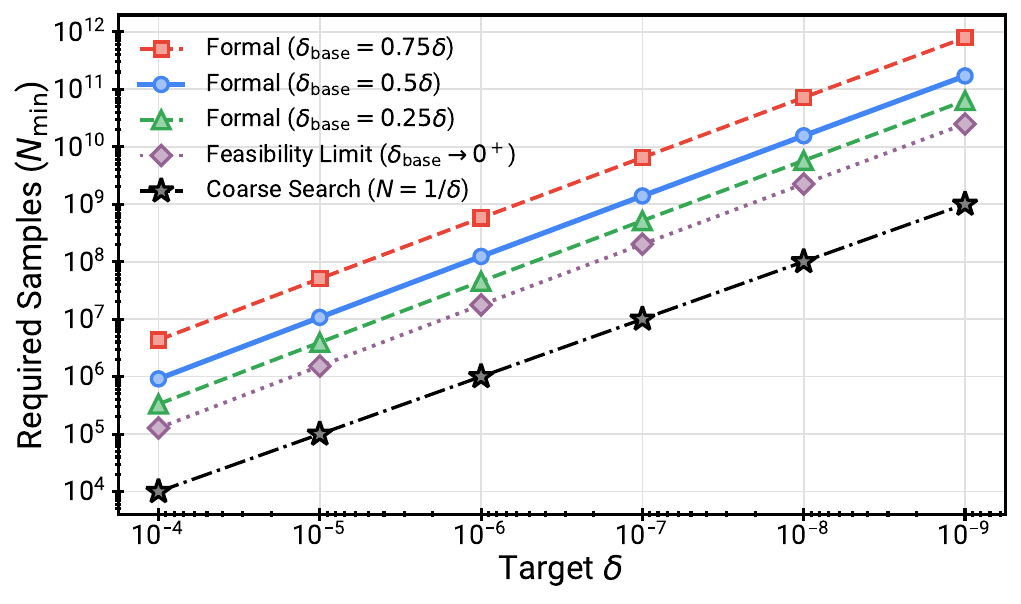}
\caption{\label{fig:sample_complexity} The number of samples needed for Monte Carlo accounting.  The blue line ($\delta_{base}=0.5\delta$) is the default setting for formal calibration recommended by prior work and used in this paper for formal calibration.}
\end{figure}

\section{Biased Batch Sizes for \minsepra} \label{sec:biased_batches}

As mentioned in \cref{sec:new_strategies}, \minsepra has one critical drawback that the other batch selection strategies don't have: the expected batch size varies across iterations, with iteration numbers near the endpoints enjoying larger batch sizes in expectation. The severity of this issue depends on the values of $T$, $K$, and $b$.  At $b=1$, there is no issue (standard random allocation), while as $b$ approaches $\frac{T}{K}$, the issue becomes more prominent. \cref{fig:biased_batches} demonstrates this issue for $T=1024$ and $K=4$. While the bias is negligible for $b \leq 8$ and small for $b=32$, it is considerable at $b=128$ and becomes completely degenerate at $b=256$ (not shown). It's natural to ask which setting we might find ourselves in: \cref{tab:reproduction_parameters} shows the optimal value of $b$ in our setting is $128$, so this is potentially a serious concern in practice.

\begin{figure}[h]
\centering
\includegraphics[width=0.5 \textwidth]{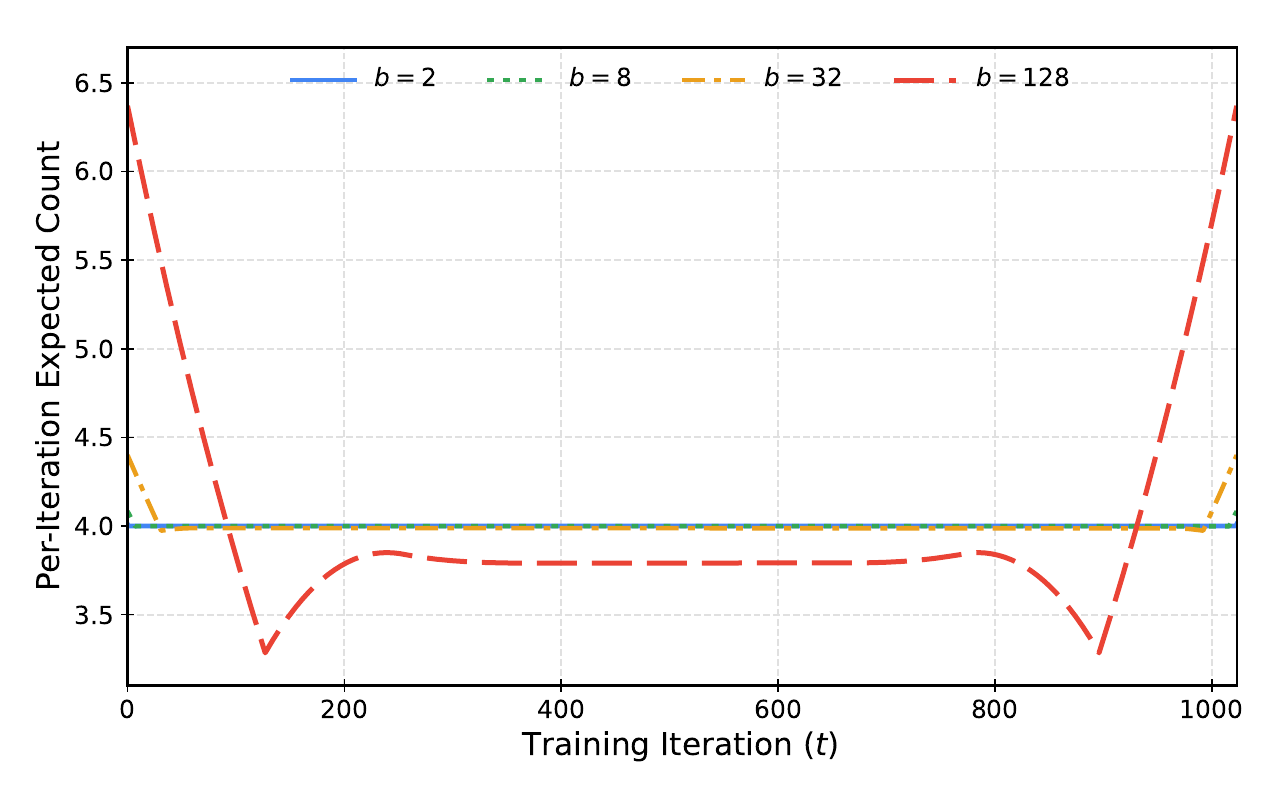}
\caption{Per-Iteration expected batch size for \minsepra with $T=1024$ and $K=4$.}
\label{fig:biased_batches}
\end{figure}

A simple modification to the sampling algorithm addresses this issue fairly naturally. Standard \minsepra samples uniformly from the set of indices $s_0, \dots, s_{K-1}$ satisfying $s_i + b \leq s_{i+1}$ for $i=0, \dots, K-2$. Adding a simple ``wraparound'' constraint $s_{K-1} + b - T \leq s_0$ ensures if a record participates in an early iteration, they cannot also participate in a late iteration, which intuitively addresses the bias issue.

To sample efficiently from this modified set, we can first sample $K$ positive gaps $(g_1, \dots, g_K)$ summing to $T$ such that every gap satisfies the separation constraint $g_j \ge b$. This can be accomplished via stars and bars by distributing the slack $T - K \cdot b$ across $K$ non-negative integer parts and adding $b$ to each.
Accumulating these gaps yields a valid $b$-separated participation subset, to which we apply an independent uniform shift $R \sim \mathrm{Uniform}([T])$ modulo $T$, giving strictly uniform expected batch sizes across iterations. That is, our subset is $\{R, R+g_1, R+g_1+g_2, \ldots R+\sum_{i=1}^{K-1} g_i\}$, except for each of these values which exceeds $T$, we subtract $T$. 

\begin{theorem}
    The above method is equivalent to sampling a uniformly random subset from those that satisfy the $b$-separation constraint (including the wraparound constraint). Furthermore, there are $\frac{T}{k} \cdot \binom{T-Kb+K-1}{K-1}$ such subsets. 
\end{theorem}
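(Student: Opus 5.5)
The plan is a double-counting argument on pairs $(S, s)$, where $S$ is a valid subset and $s \in S$ is a distinguished ``starting'' element. The key observation is that the sampler's randomness $(R, g_1, \dots, g_K)$ is exactly such a pair in disguise.

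First, I will fix the combinatorial objects. Let $\mathcal{G}$ be the set of integer compositions $(g_1, \dots, g_K)$ with $g_j \ge b$ and $\sum_j g_j = T$. Substituting $g_j' = g_j - b \ge 0$ and applying stars and bars gives $|\mathcal{G}| = \binom{T - Kb + K - 1}{K-1}$. I take the sampler to draw $g$ uniformly from $\mathcal{G}$, which is what stars and bars with a uniformly chosen arrangement produces. Let $\mathcal{V}_{\circ}$ be the set of $K$-subsets of $[T]$ satisfying $b$-separation together with the wraparound constraint. I will show that for a sorted $S = \{s_0 < \dots < s_{K-1}\}$, membership in $\mathcal{V}_{\circ}$ is equivalent to every cyclic gap $s_{i+1} - s_i$ (indices mod $K$, the last gap being $s_0 + T - s_{K-1}$) being at least $b$. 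The constraints $s_i + b \le s_{i+1}$ give the first $K-1$ gaps, and the wraparound constraint $s_{K-1} + b - T \le s_0$ gives the last one. These $K$ cyclic gaps always sum to $T$.

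Next, I will define $\Phi: [T] \times \mathcal{G} \to \{(S, s) : S \in \mathcal{V}_{\circ},\, s \in S\}$ by sending $(R, g)$ to the set $S$ of points $R, R + g_1, \dots, R + \sum_{i<K} g_i$ reduced mod $T$ into $[T]$, with $s = R$. I need three facts about $\Phi$.
\begin{itemize}
\item The $K$ points are distinct. This holds because the partial sums $0, g_1, g_1 + g_2, \dots$ are strictly increasing and all less than $T$.
\item $S$ lies in $\mathcal{V}_{\circ}$. Reading $S$ cyclically starting from $R$, its cyclic gaps are exactly $g_1, \dots, g_K$, all at least $b$, so the previous paragraph applies.
\item $\Phi$ is invertible. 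Given $(S, s)$, set $R = s$ and read off the cyclic gaps of $S$ starting at $s$; this lands in $\mathcal{G}$ and recovers $(R, g)$.
\end{itemize}
I expect the main obstacle to be this bookkeeping: the wraparound reduction mod $T$ has to be compatible with the sorted-index form of the constraints in the statement. I will handle it by always working with cyclic gaps in $\mathbb{Z}_T$ and translating back to sorted order only at the very end.

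Finally, I will count and take marginals. Since $\Phi$ is a bijection, the number of pairs is $T \cdot |\mathcal{G}|$. Each $S \in \mathcal{V}_{\circ}$ appears in exactly $K$ pairs, one per element, so
\[
|\mathcal{V}_{\circ}| = \frac{T}{K}\binom{T-Kb+K-1}{K-1},
\]
which is the stated count with $k = K$. For the distributional claim, $(R, g)$ is uniform on $[T] \times \mathcal{G}$, so $\Phi(R, g)$ is uniform over pairs. The marginal probability of any fixed $S$ is then $K/(T|\mathcal{G}|) = 1/|\mathcal{V}_{\circ}|$, so the output is uniform on $\mathcal{V}_{\circ}$. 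As a consequence, because $\mathcal{V}_{\circ}$ is invariant under rotation, every iteration lies in $S$ with the same probability $K/T$, which gives the uniform expected batch sizes.
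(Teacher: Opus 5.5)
Your proposal is correct and follows essentially the same argument as the paper, which also notes that each valid subset has exactly $K$ preimages $(R, g_1, \dots, g_K)$ (one per choice of $R \in S$, with the $g_j$ the cyclic gaps read starting from $R$). The paper then gets both uniformity and the count $\frac{T}{K}\binom{T-Kb+K-1}{K-1}$ by dividing the $T\binom{T-Kb+K-1}{K-1}$ sampler outcomes by $K$; your bijection with pointed subsets additionally spells out checks the paper leaves implicit, namely that the $K$ points are distinct, that the sampled set satisfies the wraparound constraint, and that these $K$ preimages are the only ones.
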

\begin{proof}
For any $b$-separated subset $t_1 < t_2 < \ldots t_k$, there are exactly $k$ values of $(R, g_1, g_2, \ldots, g_K)$ that produce that subset. Namely, the $i$th such value is $R = t_i, g_1 = t_{i+1 \pmod K} - t_i, g_2 = t_{i+2 \pmod K} - t_{i+1 \pmod K}, \ldots $, where we add $T$ to any negative $g_i$.

This proves a uniform distribution over $(R, g_1, g_2, \ldots, g_K)$ is equivalent to a uniform distribution over $b$-separated subsets, as desired. It also proves the number of such subsets: There are $\binom{T-Kb+K-1}{K-1}$ choices of $g_1, \ldots, g_k$ using stars-and-bars, and $T$ choices of $R$. So there are $T \cdot \binom{T-Kb+K-1}{K-1}$ values of $(R, g_1, g_2, \ldots, g_K)$, and $1/K$ times as many $b$-separated subsets.

\end{proof}

While this change to the sampling procedure is minor, formal privacy accounting must account for these changes. We describe a new scoring dynamic program to correctly account for this random rotation.

Define $\ell_t$ as we defined it for Eqs. \eqref{eq:nra-dp}, \eqref{eq:msra-dp}. Let $\calS$ be all subsets of $[T]$ in the support of our batch selection strategy. Recalling that $|\calS| = \frac{T}{K} \cdot \binom{T-Kb+K-1}{K-1}$, our goal is to compute:

\[\frac{P(\bfy)}{Q(\bfy)} = \frac{\sum_{S \in \calS} \prod_{t \in S} \ell_t}{\frac{T}{k} \cdot \binom{T-Kb+K-1}{K-1}}.\]

It suffices to give a method to compute the numerator. We decompose the numerator as:

\[\sum_{S \in \calS} \prod_{t \in S} \ell_t = \sum_{i=1}^{b-1}\sum_{S \in \calS: \min(S) =i} \prod_{t \in S} \ell_t + \sum_{S \in \calS: \min(S) \geq b} \prod_{t \in S} \ell_t.\]

We solve for all $b$ sums over sets $S$ via dynamic program, using the following observation: for any $i \in [b-1]$, if $\min(S) = i$, then the remaining $K-1$ elements of $S$ must be in $[i + b, T - b + i]$ and are all separated by at least $b$, but without the wraparound constraint. Similarly, if $\min(S) \geq b$, then we have $K$ elements of $S$ in $[b, T]$ separated by at least $b$ without the wraparound constraint. Since we have reduced to a setting without the wraparound constraint, we can reduce to an analysis similar to that of Eq. \eqref{eq:msra-dp}. In particular, let $\text{SUM}(k, lo, up)$ denote the sum over sets $S$ of $k$ elements in $[lo, up]$ which are at least $b$ apart of $\prod_{t \in S} \ell_t$. Then we have base cases $\text{SUM}(0, lo, up) = 1$ (only the empty set, for which the sum is 1) and $\text{SUM}(k, lo, up) = 0$ if $k > 0$ and $up - lo < 1 + (k-1)b$ (no such set exists, so the sum is empty). We have the recurrence:

\[\text{SUM}(k, lo, up) = \text{SUM}(k, lo+1, up) + e^{\ell_{lo}} \cdot \text{SUM}(k - 1, lo + b, up).\]

The proof of correctness for this recurrence is similar to that of \cref{thm:msra-dp} (since both recurrences are solving the same problem, only here we are solving for a sum and not a ratio), so we omit it here. Now we have:

\begin{align*}
\sum_{S \in \calS} \prod_{t \in S} \ell_t =& \sum_{i=1}^{b-1}\sum_{S \in \calS: \min(S) =i} \prod_{t \in S} \ell_t + \sum_{S \in \calS: \min(S) \geq b} \prod_{t \in S} \ell_t\\ =& \sum_{i=1}^{b-1} \ell_i \cdot \text{SUM}(K-1, i+b, T-b+i) + \text{SUM}(K, b, T).
\end{align*}

Each recurrence takes $O(1)$ time to evaluation, we evaluate $\text{SUM}$ at $O(bKT)$ triplets of inputs (the third input only takes values in $[T-b+1, T]$, and the final combination of values of $\text{SUM}$ takes $O(b)$ time, so the overall runtime for computing $P/Q$ is $O(bKT)$. Note this is a $O(b)$ multiplicative overhead compared to \cref{thm:msra-dp}; we leave it to future work to determine if this overhead can be removed.

\section{Supporting Experiments}

\subsection{Optimized Toeplitz Parameters}

In \cref{tab:reproduction_parameters_nra,tab:reproduction_parameters} we provide the precise parameters needed to reproduce any cell of our main optimized \nestedra and \minsepra methods respectively. Using this reference table, readers can easily reproduce our reported results by running our formal calibration binary with the provided inputs, and do not need to independently figure out what the optimal settings we found were.

\begin{table}[h]
\centering
\caption{\textbf{Formal Calibration Parameters and Reproduction Details for \nestedra} ($\delta = 10^{-5}, N_{\min} = 10{,}745{,}968$). The anchor configuration ($T=1024, K=4, \varepsilon=4.0$) is highlighted in gray. Toeplitz coefficients $\bfC[i,1] \propto \frac{1}{i} + r^*$ are optimized over $(b, r)$ pairs and certified under Algorithm~5 across 3 independent calibration trials.}
\label{tab:reproduction_parameters_nra}
\resizebox{0.92\textwidth}{!}{
\begin{tabular}{@{}cccccccccc@{}}
\toprule
 & $T$ & $K$ & $\varepsilon$ & $b^*$ & $r^*$ & $|| A C^{-1} ||_F^2$ & $\sigma^*$ & RMSE & \makecell{RMSE$/ K$} \\
\midrule
 & 256  & 4 & 4.0 & 32  & +0.0397 & 3.435 & 1.119 & 3.84 & 0.96 \\
 & 512  & 4 & 4.0 & 64  & +0.0375 & 3.758 & 1.037 & 3.90 & 0.97 \\
\rowcolor{gray!10} & 1024 & 4 & 4.0 & 128 & +0.0350 & 4.026 & 0.981 & 3.95 & 0.99 \\
\multirow{-4}{*}{\rotatebox{90}{Vary $T$}} & 2048 & 4 & 4.0 & 128 & +0.0513 & 4.817 & 0.828 & 3.99 & 1.00 \\
\midrule
 & 1024 & 1  & 4.0 & 512 & +0.0254 & 3.358 & 0.505 & 1.70  & 1.70 \\
 & 1024 & 2  & 4.0 & 256 & +0.0350 & 3.444 & 0.720 & 2.48  & 1.24 \\
\rowcolor{gray!10} & 1024 & 4  & 4.0 & 128 & +0.0350 & 4.026 & 0.981 & 3.95  & 0.99 \\
 & 1024 & 8  & 4.0 & 32  & +0.1190 & 5.349 & 1.321 & 7.06  & 0.88 \\
\multirow{-5}{*}{\rotatebox{90}{Vary $K$}} & 1024 & 16 & 4.0 & 16  & +0.2464 & 6.602 & 2.045 & 13.50 & 0.84 \\
\midrule
 & 1024 & 4 & 1.0  & 32  & +0.0136 & 6.823 & 1.703 & 11.62 & 2.90 \\
 & 1024 & 4 & 2.0  & 64  & +0.0438 & 4.825 & 1.361 & 6.57  & 1.64 \\
\rowcolor{gray!10} & 1024 & 4 & 4.0  & 128 & +0.0350 & 4.026 & 0.981 & 3.95  & 0.99 \\
 & 1024 & 4 & 8.0  & 128 & +0.0150 & 5.761 & 0.482 & 2.78  & 0.70 \\
\multirow{-5}{*}{\rotatebox{90}{Vary $\varepsilon$}} & 1024 & 4 & 16.0 & 256 & +0.0080 & 5.293 & 0.400 & 2.12  & 0.53 \\
\bottomrule
\end{tabular}}
\end{table}

\begin{table}[h]
\centering
\caption{\textbf{Formal Calibration Parameters and Reproduction Details for \minsepra} ($\delta = 10^{-5}, N_{\min} = 10{,}745{,}968$). The anchor configuration ($T=1024, K=4, \varepsilon=4.0$) is highlighted in gray. Toeplitz coefficients $\bfC[i,1] \propto \frac{1}{i} + r^*$ are optimized over $(b, r)$ pairs and certified under Algorithm~5 across 3 independent calibration trials.}
\label{tab:reproduction_parameters}
\resizebox{0.92\textwidth}{!}{
\begin{tabular}{@{}cccccccccc@{}}
\toprule
 & $T$ & $K$ & $\varepsilon$ & $b^*$ & $r^*$ & $|| A C^{-1} ||_F^2$ & $\sigma^*$ & RMSE & \makecell{RMSE$/ K$} \\
\midrule
 & 256  & 4 & 4.0 & 32  & +0.0184 & 4.394 & 0.818 & 3.59 & 0.90 \\
 & 512  & 4 & 4.0 & 64  & +0.0196 & 4.496 & 0.801 & 3.60 & 0.90 \\
\rowcolor{gray!10} & 1024 & 4 & 4.0 & 128 & +0.0211 & 4.657 & 0.767 & 3.63 & 0.91 \\
\multirow{-4}{*}{\rotatebox{90}{Vary $T$}} & 2048 & 4 & 4.0 & 256 & +0.0163 & 4.792 & 0.765 & 3.67 & 0.92 \\
\midrule
 & 1024 & 1 & 4.0 & 512 & +0.0269 & 3.588 & 0.468 & 1.68 & 1.68 \\
 & 1024 & 2 & 4.0 & 256 & +0.0226 & 3.963 & 0.570 & 2.26 & 1.13 \\
\rowcolor{gray!10} & 1024 & 4 & 4.0 & 128 & +0.0211 & 4.657 & 0.767 & 3.63 & 0.91 \\
 & 1024 & 8 & 4.0 & 32  & +0.0709 & 5.922 & 1.155 & 6.84 & 0.86 \\
\multirow{-5}{*}{\rotatebox{90}{Vary $K$}} & 1024 & 16 & 4.0 & 32  & +0.0121 & 8.654 & 1.509 & 13.06 & 0.82 \\
\midrule
 & 1024 & 4 & 1.0  & 128 & +0.0232 & 6.956 & 1.579 & 10.98 & 2.75 \\
 & 1024 & 4 & 2.0  & 128 & +0.0119 & 5.713 & 1.067 & 6.09 & 1.52 \\
\rowcolor{gray!10} & 1024 & 4 & 4.0  & 128 & +0.0211 & 4.657 & 0.767 & 3.63 & 0.91 \\
 & 1024 & 4 & 8.0  & 128 & +0.0150 & 5.761 & 0.482 & 2.78 & 0.70 \\
\multirow{-5}{*}{\rotatebox{90}{Vary $\varepsilon$}} & 1024 & 4 & 16.0 & 256 & +0.0080 & 5.293 & 0.400 & 2.12 & 0.53 \\
\bottomrule
\end{tabular}}
\end{table}

\subsection{Epsilon-Delta Trade-off Curves}

While the main paper focuses primarily on RMSE of \bandmf mechanisms calibrated to specific $(\varepsilon, \delta)$ targets, it is also informative to look at the $(\varepsilon, \delta)$ trade-off curve that \bandmf has when instantiated with different Toeplitz coefficients and batch selection strategies. \cref{fig:hockey_stick} shows that when using the Toeptliz strategy optimized for \cyclicra, both \nestedra and \minsepra provide significant benefits over the PLD-based approaches, and are quite close in behavior with \minsepra providing slightly better trade-offs.  When the Toeplitz coefficients are specifically optimized for \minsepra, the gap widens substantially.

\begin{figure}[H]
\includegraphics[width=0.5\textwidth]{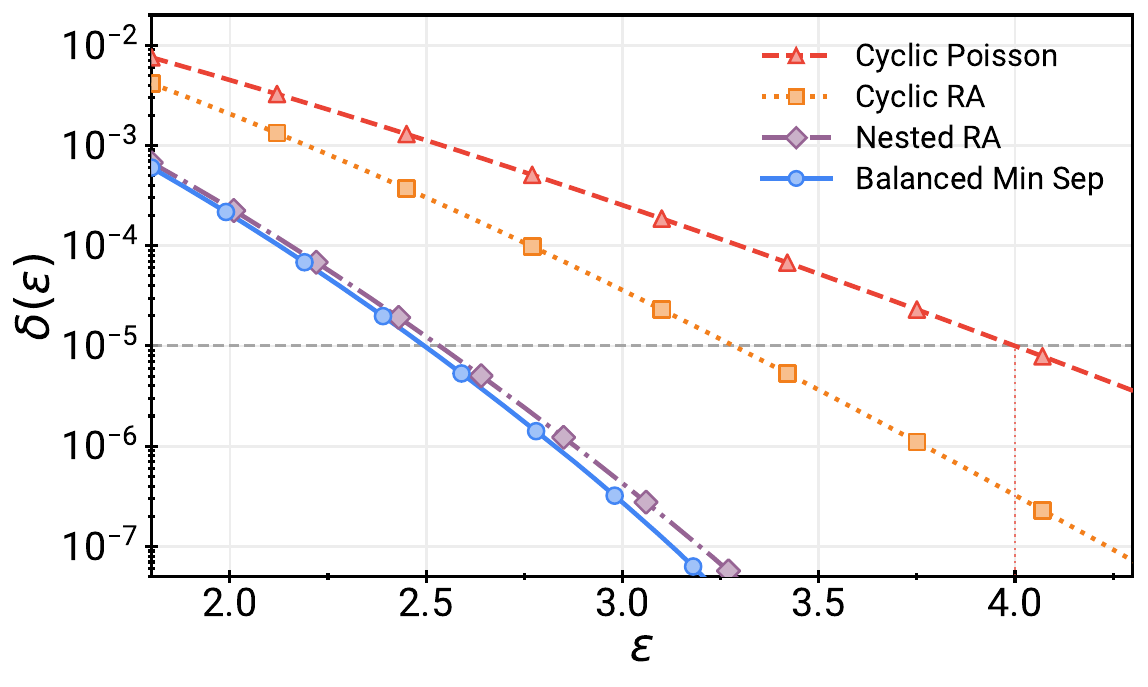}
\includegraphics[width=0.5\textwidth]{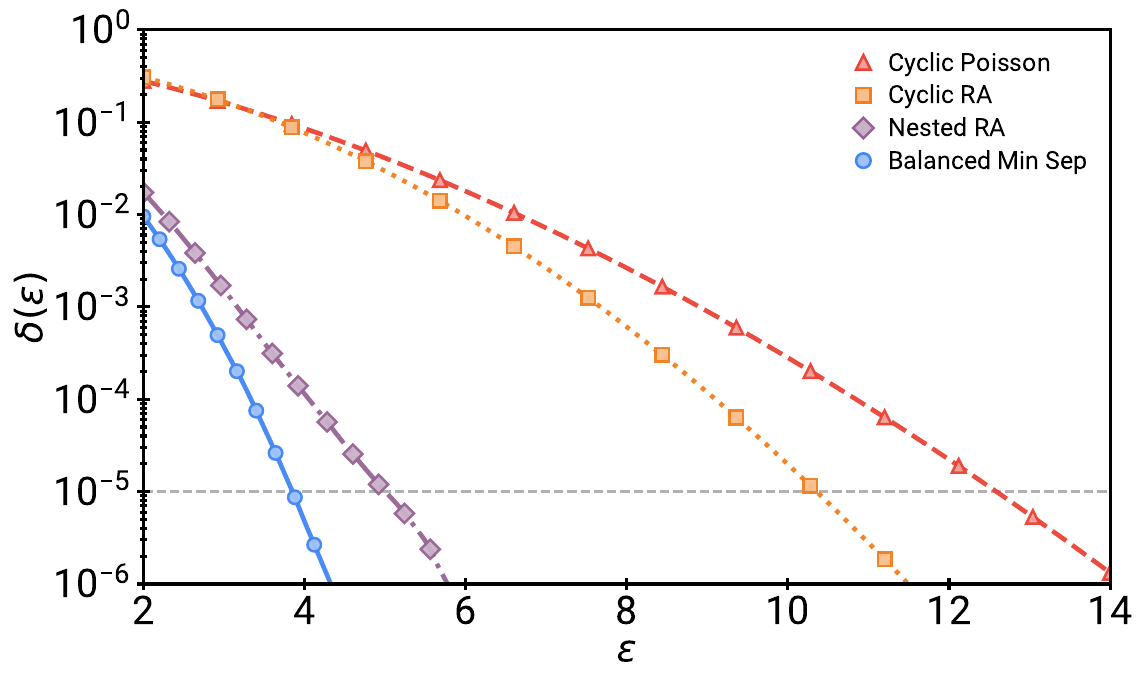}
\caption{\label{fig:hockey_stick} $(\varepsilon, \delta)$-trade-off curves for different batch selection strategies using Toeplitz coefficients optimized for \cyclicra (left) and \minsepra (right) at a target privacy level of $(4, 10^{-5})$. }
\end{figure}

\subsection{Monte Carlo optimized  Toeplitz Coefficients}

When using \cyclicpoisson or \cyclicra, the Toeplitz coefficients can be optimized independently of the accounting (since PLD accounting only requires knowing the L2 norm bound on the columns of $\bfC$). We can observe this directly in \cref{fig:optimal_coefs}, where these two lines overlap \footnote{This is not always the case, as in some settings the optimal number of bands may differ between these batch selection strategies.}

When using Monte Carlo accounting, we showed that we can benefit substantially from optimizing Toeplitz coefficients and accounting for the impact on privacy amplification. \cref{fig:optimal_coefs} shows how the optimal coefficients vary under different batch selection strategies.  We see that under \nestedra and \minsepra, the Toeplitz coefficients drop off more sharply. This drop off occurs because amplification for \nestedra and \minsepra improves as coefficients drop off faster, whereas the coefficients themselves do not affect amplification cyclic methods as long as the column norm is fixed. 

To see why coefficients dropping off helps amplification, intuitively, privacy amplification helps because it increases the adversary's uncertainty about what signal an example contributes to $\bfC \bfx$. When pairs of columns of $\bfC$ have high dot products, this reduces the uncertainty. As an extreme example, if $\bfC$ were the all-ones matrix, then $\bfC \bfx$ would only depend on the number of participations (which we keep fixed) and then privacy amplification wouldn't help at all. More broadly, if $\bfC[:, i]$ and $\bfC[:, j]$ have high dot product, then randomizing whether an example participates in round $i$ or $j$ does not add much more uncertainty for the adversary. When the coefficients drop off, the dot products between columns decreases and thus privacy amplification is increased. \minsepra has more entropy in $\bfx$ than \nestedra does, so it benefits more from this uncertainty, hence why its coefficients drop off even faster.

\begin{figure}[H]
\centering
\includegraphics[width=0.5\textwidth]{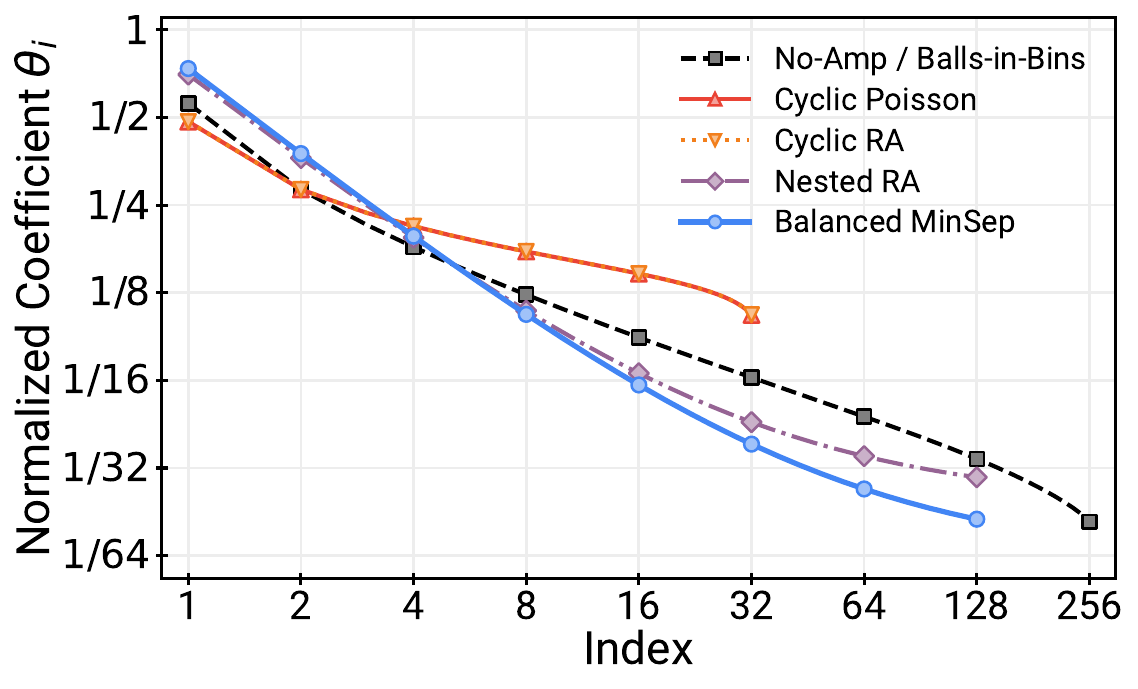}
\caption{Toeplitz coefficients optimized for different batch selection strategies when $T=1024$, $K=4$, and $\varepsilon=4$.}
\label{fig:optimal_coefs}
\end{figure}

\newpage

\end{document}